\definecolor{e-mail}{rgb}{0,.40,.80}
\definecolor{reference}{rgb}{.20,.60,.22}
\definecolor{citation}{rgb}{0,.40,.80}
\newtheorem{thm}{Theorem}
\newtheorem{cor}[thm]{Corollary}
\newtheorem{lem}[thm]{Lemma}
\newtheorem{prop}[thm]{Proposition}
\theoremstyle{definition}
\newtheorem{defn}[thm]{Definition}
\theoremstyle{remark}
\newtheorem{rem}[thm]{Remark}
\numberwithin{thm}{section}
\theoremstyle{definition}
\theoremstyle{definition}
\theoremstyle{definition}
\numberwithin{equation}{section}
\newcommand{\N}{\mathbb N} 
\newcommand{\Z}{\mathbb Z} 
\newcommand{\K}{\mathbb K} 
\newcommand{\Kx}{\mathbb{K}(x)}
\title{Computing discrete residues of rational functions}
\author{Carlos E. Arreche}
\address{Department of Mathematical Sciences \\ The University of Texas at Dallas}
\email{arreche@utdallas.edu}
\author{Hari P. Sitaula}
\address{Department of Mathematical Sciences \\  Montana Technological University}
\email{hsitaula@mtech.edu}
\begin{document}

\begin{abstract}
In 2012 Chen and Singer introduced the notion of discrete residues for rational functions as a complete obstruction to rational summability. More explicitly, for a given rational function $f(x)$, there exists a rational function $g(x)$ such that $f(x) = g(x+1) - g(x)$ if and only if every discrete residue of $f(x)$ is zero. Discrete residues have many important further applications beyond summability: to creative telescoping problems, thence to the determination of \mbox{(differential-)algebraic} relations among hypergeometric sequences, and subsequently to the computation of (differential) Galois groups of difference equations. However, the discrete residues of a rational function are defined in terms of its complete partial fraction decomposition, which makes their direct computation impractical due to the high complexity of completely factoring arbitrary denominator polynomials into linear factors. We develop a factorization-free algorithm to compute discrete residues of rational functions, relying only on gcd computations and linear algebra.
\end{abstract}

\subjclass[2020]{39A06, 33F10, 68W30, 40C15, 11Y50}
\keywords{discrete residues, 
creative telescoping, 
rational summability, 
difference Galois theory, 
factorization-free algorithm}

\maketitle

\tableofcontents

\section{Introduction} \label{sec:introduction}

Let $\K$ be field of characteristic zero, and consider the field $\Kx$ of rational functions in an indeterminate $x$ with coefficients in $\K$. First formulated in \cite{Abramov:1971}, the \emph{rational summation problem} asks, for a given $f(x)\in\Kx$, to construct $g(x),h(x)\in\Kx$ such that \begin{equation} \label{eq:summation-problem}f(x)=g(x+1)-g(x) +h(x)\end{equation} and the degree of the denominator of $h(x)$ is as small as possible. Such an $h(x)$ is called a \emph{reduced form} of $f(x)$.
The rational summation problem has a long and illustrious history \cite{Abramov:1971,Abramov:1975,Moenck:1977,Karr1981,Paule:1995,Pirastu1995a,Pirastu1995b, Malm:1995, Abramov1995,polyakov:2008}. It is clear that the problem admits a solution (by the well-ordering principle), and that such a solution is not unique, because for any solution $(g(x),h(x))$ we obtain another solution $(g(x)-h(x),h(x+1))$ since the degree of the denominator of $h(x+1)$ is the same as that of $h(x)$. In comparing the approaches in \emph{op.~cit.}, one can then ask for the denominator of $g(x)$ to be also as small as possible, and/or to compute some (any) solution $(g(x),h(x))$ to \eqref{eq:summation-problem} as efficiently as possible. We refer to the introduction of \cite{Pirastu1995b} for a concise summary and comparison between most of these different approaches.

Every algorithm for solving the rational summation problem also addresses, as a byproduct, the \emph{rational summability problem} of deciding, for a given $f(x)\in\Kx$, whether (just yes/no) there exists $g(x)\in \Kx$ such that $f(x)=g(x+1)-g(x)$, in which case we say $f(x)$ is \emph{rationally summable}. There are various algorithms for addressing this simpler question, designed to forego the usually expensive and often irrelevant computation of the \emph{certificate} $g(x)$, which are presented and discussed for example in \cite{Matusevich:2000,Gerhard:2003,BCCL:2010,Chen:2015,chen-singer:2012,Giesbrecht:2022} and the references therein.

We center our attention on the approach to rational summability proposed in \cite{chen-singer:2012}. The \emph{discrete residues} of $f(x)\in\Kx$ are constants defined in terms of the complete partial fraction decomposition of $f(x)$, and have the obstruction-theoretic property that they are all zero if and only if $f(x)$ is rationally summable. Computing these discrete residues directly from their definition is impractical due to the high computational cost of factoring the denominator of $f(x)$ into linear factors. We propose here algorithms for computing these discrete residues relying only on gcd computations and solving systems of linear equations in $\K$. To be clear, the discrete residue data of an arbitrary $f(x)$ are in general algebraic over $\K$. We submit that it would be perverse to avoid expensive factorizations throughout the algorithm, only to demand them at the very end! Inspired by \cite[Thm.~1]{bronstein:1993} our output consists of pairs of polynomials with coefficients in $\K$; one whose roots describe the places where $f(x)$ has non-zero discrete residues, the other whose evaluation at each such root gives the value of the corresponding discrete residue (see \S\ref{sec:dres-sum} for a more detailed description). Of course, any user who wishes to actually see the discrete residue data of $f$ may use the $\K$-polynomials produced by our algorithms to compute them explicitly to their heart's content and at their own risk.

Let us now describe our general strategy for computing discrete residues (cf.~Algorithm~\ref{alg:dres}). We apply iteratively Hermite reduction to $f(x)$ in order to reduce to the special case where the denominator of $f(x)$ is squarefree. Then we compute a reduced form $\bar{f}(x)$ of $f(x)$ whose denominator is both squarefree and shift-free, so that the discrete residues of $f(x)$ are the classical first-order residues of $\bar{f}(x)$. The factorization-free computation of the latter is finally achieved by \cite[Lem.~5.1]{trager:1976}.

Our proposed algorithms to compute discrete residues are obtained by combining in novel ways many old ingredients. Indeed, Hermite reduction is very old \cite{Ostrogradsky1845,Hermite1872}, and its iteration in Algorithm~\ref{alg:hermite-list} is already suggested in \cite[\S5]{Horowitz1971} for computing iterated integrals of rational functions. And yet, we have not seen this approach being more widely used in the literature, and it seems to us a good trick to have to hand. Indeed, we wonder whether it could provide a reasonable alternative, at least in some cases and for some purposes, to the algorithm in \cite{bronstein:1993} for symbolically computing complete partial fraction decompositions over the field of definition. Having thus reduced via Algorithm~\ref{alg:hermite-list} to the case where $f(x)$ has squarefree denominator, many of the varied earlier approaches to the summation and summability problems seem to accidentally collide into essentially the same procedure when restricted to this simpler situation. In this sense, our own reduction procedure described in \S\ref{sec:simple-reduction} strikes us as eerily similar to the one presented in \cite[\S5]{Gerhard:2003} over 20 years ago --- that ours may look simpler is a direct consequence of its being restricted to a simpler class of inputs. The simplicity of our approach allows us to exercise a great deal of control over the form of the outputs, in a ways which are particularly useful in developing some extensions of our basic procedures, elaborated in \S\ref{sec:extensions}. It is not obvious to us (but it would be interesting to see) how the same goals might be better accomplished differently, say by combining the reduction of \cite{Gerhard:2003} with the symbolic complete partial fraction decomposition algorithm of \cite{bronstein:1993}.

Our interest in computing discrete residues is motivated by the following variant of the summability problem, which often arises as a subproblem in algorithms for computing (differential) Galois groups associated with (shift) difference equations \cite{vanderput-singer:1997,hendriks:1998, HardouinSinger2008,arreche:2017}. Given several $f_1(x),\dots,f_n(x)\in\Kx$, compute (or decide non-existence) of $\mathbf{0}\neq\mathbf{v}=(v_1,\dots,v_n)\in\K^n$ such that \begin{equation}\label{eq:multi-telescoper}v_1f_1(x)+\dots+v_nf_n(x)=g_\mathbf{v}(x+1)-g_\mathbf{v}(x)\end{equation} for some $g_\mathbf{v}(x)\in\Kx$. Even if one wishes to compute the certificate $g_\mathbf{v}(x)$ explicitly, it is wasteful to perform a rational summation algorithm $n$ times for each $f_i(x)$ separately to produce $(g_i(x),h_i(x))$ as in \eqref{eq:summation-problem} as an intermediate step, because there is no guarantee that $h_\mathbf{v}(x)=\sum_iv_ih_i(x)$ has smallest possible denominator, so one may need to perform the algorithm an $(n+1)^\text{st}$ time to $h_\mathbf{v}(x)$, to decide summability. These inefficiencies are exacerbated in the more general context of \emph{creative telescoping problems} \cite{Zeilberger:1990,Zeilberger:1991,WZ:1992}, where the unknown $v_i\in\K$ are replaced with unknown linear differential operators $\mathcal{L}_i\in\K[\frac{d}{dx}]$. We refer to \cite{Chen:2019} for a succint and illuminating discussion of the history and computational aspects of creative telescoping problems, and in particular how the ``fourth generation'' reduction-based approaches bypass the computation of certificates, as our motivating problem \eqref{eq:multi-telescoper} illustrates.

Our approach based on discrete residues makes it very straightforward how to accommodate several $f_i(x)$ simultaneously as in \eqref{eq:multi-telescoper}, which adaptation is less obvious (to us) how to carry out efficiently using other reduction methods. On the other hand, we share in the reader's disappointment that we offer hardly any theoretical or experimental evidence supporting the efficiency of our approach in contrast with other possible alternatives. In fact, we expect that our approach will not be universally more efficient than some future adaptation of \cite[\S5]{Gerhard:2003} (for example) to the situation of \eqref{eq:multi-telescoper}, but rather that it can be a useful complement to it. We also expect that the conceptual simplicity of our approach will be useful in developing analogues to other related (but more technically challenging) contexts beyond the shift case, such as $q$-difference equations, Mahler difference equations, and elliptic difference equations, for which the corresponding notions of discrete residues have also been developed respectively in \cite{chen-singer:2012}, \cite{arreche-zhang:2022,arreche-zhang:2023}, and \cite{HardouinSinger2021}.

\section{Preliminaries} \label{sec:preliminaries}

\subsection{Basic notation and conventions} \label{sec:notation}

We denote by $\N$ the set of strictly positive integers, and by $\K$ a computable field of characteristic zero in which it is feasible to compute integer solutions to arbitrary polynomial equations with coefficients in $\K$. Such a field is termed \emph{canonical} in \cite{Abramov:1971}. We denote by $\overline\K$ a fixed algebraic closure of $\K$. We do not assume $\K$ is algebraically closed, and we will only refer to $\overline\K$ in proofs or for defining theoretical notions, never for computations.

We work in the field $\Kx$ of rational functions in a formal (transcendental) indeterminate $x$. For $f(x)\in \Kx$, we define \[
\sigma:f(x)\mapsto f(x+1);\qquad\text{and}\qquad \Delta:f(x)\mapsto f(x+1)-f(x).
\] Note that $\sigma$ is a $\K$-linear field automorphism of $\Kx$ and $\Delta=\sigma-\mathrm{id}$ is only a $\K$-linear map with $\mathrm{ker}(\Delta)=\K$. We often suppress the functional notation and write $f$ instead of $f(x)$, $\sigma(f)$ instead of $f(x+1)$, etc., when no confusion is likely to arise.

A \emph{proper} rational function is either $0$ or else has numberator of strictly smaller degree than that of the denominator. We assume implicitly throughout that rational functions are normalized to have monic denominator. Even when our rational functions are obtained as (intermediate) outputs of some procedures, we will take care to arrange things so that this normalization always holds. In particular, we also assume that the outputs of $\mathrm{gcd}$ and $\mathrm{lcm}$ procedures are also always normalized to be monic. An unadorned $\mathrm{gcd}$ or $\mathrm{lcm}$ or $\mathrm{deg}$ means that it is with respect to $x$. On the few occasions where we need a $\mathrm{gcd}$ with respect to a different variable $z$, we shall write $\mathrm{gcd}_z$. We write $\frac{d}{dx}$ (resp., $\frac{d}{dz}$) for the usual derivation operator with respect to $x$ (resp., with respect to $z$).

\subsection{Partial fraction decompositions}

A polynomial $b\in\Kx$ is \emph{squarefree} if $b\neq 0$ and $\mathrm{gcd}\left(b,\frac{d}{dx}b\right)=1$. Consider a proper rational function $f=\frac{a}{b}\in\Kx$, with $\mathrm{deg}(b)\geq 1$ and $\mathrm{gcd}(a,b)=1$. Suppose further that $b$ is squarefree, and that we are given a set $b_1,\dots,b_n\in\K[x]$ of monic non-constant polynomials such that $\mathrm{gcd}(b_i,b_j)=1$ whenever $i\neq j$ and $\prod_{i=1}^nb_i=b$. Then there exist unique non-zero polynomials $a_1,\dots,a_n\in\K[x]$ with $\mathrm{deg}(a_i)<\mathrm{deg}(b_i)$ for each $i=1,\dots,n$ such that $f=\sum_{i=1}^n\frac{a_i}{b_i}$. In this situation, we denote \begin{equation}\label{eq:parfrac-def} \mathtt{ParFrac}(f;b_1,\dots,b_n):=(a_1,\dots,a_n).    
\end{equation} We emphasize that the computation of partial fraction decompositions \eqref{eq:parfrac-def} can be done very efficiently \cite{kung:1977}, provided that the denominator $b$ of $f$ has already been factored into pairwise relatively prime factors $b_i$, which need not be irreducible in $\K[x]$. One can similarly carry out such partial fraction decompositions more generally for pre-factored denominators $b$ that are not necessarily squarefree. But here we only need to compute partial fraction decompositions for pre-factored squarefree denominators, in which case the notation \eqref{eq:parfrac-def} is conveniently light.

\subsection{Summability and dispersion} \label{sec:sum-disp}

We say $f\in\Kx$ is \emph{(rationally) summable} if there exists $g\in\Kx$ such that $f=\Delta(g)$. For a non-constant polynomial $b\in\K[x]$, we follow the original \cite{Abramov:1971} in defining the \emph{dispersion} of $b$ \[\mathrm{disp}(b):=\mathrm{max}\{\ell\in\N \ | \ \mathrm{gcd}(b,\sigma^\ell(b))\neq 1\}.\] For a reduced rational function $f=\frac{a}{b}\in\Kx$ with $\mathrm{gcd}(a,b)=1$ and $b\notin \K$, the \emph{polar dispersion} $\mathrm{pdisp}(f):=\mathrm{disp}(b)$.

We denote by $\overline\K/\Z$ the set of orbits for the action of the additive group $\Z$ on $\overline\K$. For $\alpha\in\overline\K$, we denote \[\omega(\alpha):=\{\alpha+n \ | \ n\in\Z\},\] the unique orbit in $\overline\K/\Z$ containing $\alpha$. We will often simply write $\omega\in\overline\K/\Z$ whenever there is no need to reference a specific $\alpha\in\omega$.

\section{Discrete residues and summability}\label{sec:dres-sum}

\begin{defn}[\protect{\cite[Def.~2.3]{chen-singer:2012}}]\label{def:dres} Let $f\in\Kx$, and consider the complete partial fraction decomposition
\begin{equation} \label{eq:partial-fraction-decomposition}f=p+\sum_{k\in\N}\sum_{\alpha\in\overline\K}\frac{c_k(\alpha)}{(x-\alpha)^k},\end{equation} where $p\in\K[x]$ and all but finitely many of the $c_k(\alpha)\in\overline\K$ are zero for $k\in\N$ and $\alpha\in\overline\K$. We define the \emph{discrete residue} of $f$ of order $k\in\N$ at the orbit $\omega\in\overline\K/\Z$ to be
\begin{equation}\label{eq:dres-def}
    \mathrm{dres}(f,\omega,k):=\sum_{\alpha\in\omega} c_k(\alpha).
\end{equation}
\end{defn}

The relevance of discrete residues to the study of rational summability is captured by the following result.

\begin{prop}[\protect{\cite[Prop.~2.5]{chen-singer:2012}}] \label{prop:dres-sum} 
$f\in\Kx$ is rationally summable if and only if $\mathrm{dres}(f,\omega,k)=0$ for every $\omega\in \overline\K$ and $k\in\N$.
    
\end{prop}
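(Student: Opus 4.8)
The plan is to exploit the fact that the shift $\sigma$, and hence $\Delta = \sigma-\mathrm{id}$, respects the complete partial fraction decomposition \eqref{eq:partial-fraction-decomposition} in a rigid way. Since $\sigma\left(\frac{1}{(x-\alpha)^k}\right)=\frac{1}{(x-(\alpha-1))^k}$, the operator $\sigma$ preserves the polynomial part $\overline\K[x]$ and, for each orbit $\omega\in\overline\K/\Z$ and each order $k\in\N$, permutes the basis $\left\{\frac{1}{(x-\alpha)^k}:\alpha\in\omega\right\}$ of the subspace $V_{\omega,k}:=\mathrm{span}_{\overline\K}\left\{\frac{1}{(x-\alpha)^k}:\alpha\in\omega\right\}$. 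Thus the decomposition $\overline\K(x)=\overline\K[x]\oplus\bigoplus_{\omega,k}V_{\omega,k}$ is $\Delta$-stable, and by uniqueness of partial fractions the summability of $f$ decouples completely across the polynomial part and the individual components. Writing $f=p+\sum_{\omega,k}f_{\omega,k}$ accordingly, I would argue that $f$ is summable if and only if $p$ and each $f_{\omega,k}$ is separately the image under $\Delta$ of an element of the corresponding summand.

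Next I would dispatch the routine pieces. The polynomial part $p$ is always summable, since $\Delta$ maps polynomials of degree $d+1$ to polynomials of degree $d$ and is therefore surjective onto $\overline\K[x]$; it contributes no obstruction and indeed no $c_k(\alpha)$. It then remains to analyze a single component $V_{\omega,k}$. Fixing a representative $\alpha_0\in\omega$ and writing $e_n:=\frac{1}{(x-\alpha_0-n)^k}$ and $c_n:=c_k(\alpha_0+n)$ for $n\in\Z$, the relation $\sigma(e_n)=e_{n-1}$ gives $\Delta(e_n)=e_{n-1}-e_n$, so for any finitely supported $g=\sum_n d_n e_n\in V_{\omega,k}$ one computes $\Delta(g)=\sum_n(d_{n+1}-d_n)\,e_n$. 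Hence $f_{\omega,k}=\sum_n c_n e_n$ lies in $\Delta(V_{\omega,k})$ precisely when the finitely supported sequence $(c_n)$ equals the forward difference $(d_{n+1}-d_n)$ of some finitely supported sequence $(d_n)$.

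The heart of the argument is then the elementary but decisive observation that a finitely supported sequence $(c_n)_{n\in\Z}$ is the forward difference of a finitely supported sequence if and only if $\sum_{n\in\Z}c_n=0$: necessity is immediate by telescoping, and for sufficiency the partial-sum sequence $d_n:=-\sum_{m\geq n}c_m$ is finitely supported exactly when the total sum vanishes, and satisfies $d_{n+1}-d_n=c_n$. Since $\sum_n c_n=\sum_{\alpha\in\omega}c_k(\alpha)=\mathrm{dres}(f,\omega,k)$, this shows $f_{\omega,k}$ is summable if and only if $\mathrm{dres}(f,\omega,k)=0$, and reassembling the components proves the proposition. The main obstacle I anticipate is not this combinatorial core but the bookkeeping needed to pass between $\K$ and $\overline\K$: the residues are defined via the expansion over $\overline\K$, whereas summability requires a certificate $g\in\K(x)$, and the telescoping construction above a priori produces $g$ only over $\overline\K$. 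To close this gap I would invoke a Galois-descent argument, noting that $g$ is determined by $f$ up to an element of $\ker(\Delta)=\overline\K$, so that for $\tau\in\mathrm{Gal}(\overline\K/\K)$ the difference $g-\tau(g)$ is constant; by uniqueness of partial fractions this forces the proper (pole) part of $g$ to be $\mathrm{Gal}(\overline\K/\K)$-invariant, hence defined over $\K$, after which the polynomial part can be corrected over $\K$ using surjectivity of $\Delta$ on $\K[x]$.
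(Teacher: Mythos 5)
Your proof is correct, but it takes a genuinely different route from the one the paper relies on. In fact the paper does not prove Proposition~\ref{prop:dres-sum} at all: it quotes it from \cite[Prop.~2.5]{chen-singer:2012}, and the proof there (implicit in the discussion in \S\ref{sec:dres-sum}) proceeds by subtracting telescoping differences to replace $f$ by a reduced form $h$ as in \eqref{eq:general-reduced-form}, with at most one pole per orbit, and then invoking Abramov's dispersion obstruction (Proposition~\ref{prop:summable-dispersion}) to conclude that $h$, having polar dispersion $0$, is summable only if it is zero. You instead make the orbit decomposition carry the entire argument: the $\Delta$-stable splitting $\overline\K(x)=\overline\K[x]\oplus\bigoplus_{\omega,k}V_{\omega,k}$, the identification of $\Delta$ on $V_{\omega,k}$ with the forward-difference operator on finitely supported $\Z$-indexed sequences, and the elementary lemma that such a sequence is a difference of a finitely supported sequence iff its total sum vanishes. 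This is more self-contained: in particular it re-proves, rather than assumes, the content of Proposition~\ref{prop:summable-dispersion} in the relevant case (a nonzero $h$ with $\mathrm{pdisp}(h)=0$ has, in some component $(\omega,k)$, a residue sequence supported at a single point, whose sum is that nonzero residue, so $h$ is not summable), at the cost of somewhat more bookkeeping. Your closing Galois-descent paragraph addresses a point that is genuinely necessary and often glossed over: summability demands a certificate in $\Kx$, while the telescoping construction lives over $\overline\K$. Your argument is sound in characteristic zero --- since $g-\tau(g)\in\ker(\Delta)=\overline\K$ for every $\tau\in\mathrm{Aut}(\overline\K/\K)$, uniqueness of the polynomial-plus-proper decomposition forces the proper part of $g$ to be invariant, hence in $\Kx$ because the fixed field of $\mathrm{Aut}(\overline\K/\K)$ is $\K$, after which surjectivity of $\Delta$ on $\K[x]$ finishes the polynomial part. (A slightly shorter descent is available: your explicit formula $d_n=-\sum_{m\geq n}c_m$ is Galois-equivariant in the residues $c_n$, so the certificate it produces is automatically invariant once the polynomial part is treated over $\K$ from the start.)
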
 As pointed out in \cite[Rem.~2.6]{chen-singer:2012}, the above Proposition~\ref{prop:dres-sum} recasts in terms of discrete residues a well-known rational summability criterion that reverberates throughout the literature, for example in \cite[p.~305]{Abramov1995}, \cite[Thm.~10]{Matusevich:2000}, \cite[Thm.~11]{Gerhard:2003}, \cite[Cor.~1]{ash:2005}. All of these rely in some form or another on the following fundamental result of Abramov, that already gives an important obstruction to summability.

\begin{prop}[\protect{\cite[Prop.~3]{Abramov:1971}}] \label{prop:summable-dispersion} If a proper rational function $0\neq f\in\Kx$ is rationally summable then $\mathrm{pdisp}(f)>0$.
\end{prop}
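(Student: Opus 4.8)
The plan is to prove the implication directly: assume $f=\frac{a}{b}$ is proper, nonzero, and summable, so that $f=\Delta(g)=\sigma(g)-g$ for some $g\in\Kx$, and then exhibit two poles of $f$ lying in a common $\Z$-orbit at positive integer distance, which forces $\mathrm{pdisp}(f)>0$. First I would dispose of a degenerate case: since $f$ is proper and nonzero its denominator $b$ is non-constant, so $f$ is not a polynomial; as $\Delta$ maps $\K[x]$ into $\K[x]$, the certificate $g$ cannot be a polynomial either, and hence $g$ has at least one pole in $\overline\K$.

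The key step is a careful bookkeeping of poles under the shift, carried out over $\overline\K$. Fix a $\Z$-orbit $\omega\in\overline\K/\Z$ containing a pole of $g$; the poles of $g$ inside $\omega$ form a finite nonempty set, so I may choose $\alpha\in\omega$ maximal (no $\alpha+n$ with $n\in\N$ is a pole of $g$) and $\beta\in\omega$ minimal (no $\beta-n$ with $n\in\N$ is a pole of $g$). I claim $\alpha$ is a pole of $f$: writing $m\geq 1$ for the order of the pole of $g$ at $\alpha$, the summand $-g$ contributes a pole of order exactly $m$ there, whereas $\sigma(g)=g(x+1)$ can only have a pole at $\alpha$ if $\alpha+1$ were a pole of $g$, which is excluded by maximality; hence $\sigma(g)$ is regular at $\alpha$ and $f$ inherits the pole of order $m$. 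A symmetric argument shows $\beta-1$ is a pole of $f$: now $\sigma(g)$ has a pole at $\beta-1$ of the same order as $g$ at $\beta$, while $-g$ is regular at $\beta-1$ by minimality of $\beta$.

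To conclude, observe that both $\alpha$ and $\beta-1$ lie in $\omega$, and that $\alpha-(\beta-1)=(\alpha-\beta)+1$ is a positive integer because $\alpha-\beta\in\Z_{\geq 0}$ (as $\alpha$ is the top and $\beta$ the bottom of the same orbit). Thus $b$ has the two distinct roots $\alpha$ and $\beta-1$ differing by the positive integer $\ell:=\alpha-\beta+1$, so $\gcd(b,\sigma^\ell(b))\neq 1$ and therefore $\mathrm{pdisp}(f)=\mathrm{disp}(b)\geq\ell>0$.

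The main obstacle is making the non-cancellation rigorous rather than heuristic: the crux is the extremal (top/bottom of orbit) selection of $\alpha$ and $\beta$, which guarantees that the highest-order Laurent coefficients of $f$ at these points come entirely from one of the two summands $\sigma(g)$ or $-g$ and cannot be annihilated by the other. Once this localization at a single pole is set up cleanly---by comparing the pole loci of $g$ and of its shift $\sigma(g)=g(x+1)$---the degree and orbit arithmetic is routine. I would note in passing that one could instead deduce the statement from Proposition~\ref{prop:dres-sum} (if $\mathrm{pdisp}(f)=0$ then each orbit carries at most one pole, so some discrete residue equals a nonzero partial-fraction coefficient), but since the discrete-residue criterion is itself built on top of Abramov's result, the self-contained pole-tracking argument above is preferable.
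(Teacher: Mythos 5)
Your proof is correct and complete: the extremal choice of $\alpha$ (topmost pole of the certificate $g$ in its orbit) and $\beta$ (bottommost) cleanly rules out cancellation between $\sigma(g)$ and $-g$, and the orbit arithmetic giving $\gcd\bigl(b,\sigma^{\ell}(b)\bigr)\neq 1$ with $\ell=\alpha-\beta+1\geq 1$ is right. The paper gives no proof of this statement (it is quoted from Abramov's 1971 paper), and your pole-tracking argument is essentially Abramov's classical one, so the two approaches coincide; your closing remark that deducing it from Proposition~\ref{prop:dres-sum} would be methodologically backwards is also well taken.
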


Discrete residues are also intimately related to the computation of reduced forms for $f$, in the sense that, as discussed in \cite[\S2.4]{chen-singer:2012}, every reduced form $h$ of $f$ as in \eqref{eq:summation-problem} has the form \begin{equation}\label{eq:general-reduced-form}h=\sum_{k\in\N}\sum_{\omega\in\overline\K}\frac{\mathrm{dres}(f,\omega,k)}{(x-\alpha_{\omega})^k}\end{equation} for some arbitrary choice of representatives $\alpha_{\omega}\in\omega$. Conversely, for every $h$ of the form \eqref{eq:general-reduced-form}, an immediate application of Proposition~\ref{prop:dres-sum} yields that $f-h$ is rationally summable. Another equivalent characterization for $h\in\Kx$ to be a reduced form is for it to have polar dispersion $0$. By \eqref{eq:general-reduced-form}, knowing the $\mathrm{dres}(f,\omega,k)$ is ``the same'' as knowing some/all reduced forms $h$ of $f$. But discrete residues still serve as a very useful organizing principle and technical tool, for both theoretical and practical computations.

For a given $f\in\K(x)$, our goal is to compute polynomials $B_k(x),D_k(x)\in\K[x]$ for each $k\in\N$ such that $(B_k,D_k)=(1,0)$ if and only if $\mathrm{dres}(f,\omega,k)=0$ for every $\omega\in\overline\K$ (which holds for all but finitely many $k\in\N$) and, for the remaining $k\in\N$, we have $0\leq\mathrm{deg}(D_k)<\mathrm{deg}(B_k)$ and $B_k$ is squarefree with $\mathrm{disp}(B_k)=0$. These polynomials will have the property that the set of roots $\alpha\in\overline\K$ of $B_k$ is a complete and irredundant set of representatives for all the orbits $\omega\in\overline\K/\Z$ such that $\mathrm{dres}(f,\omega,k)\neq 0$, and for each such root $\alpha\in\overline\K$ such that $B_k(\alpha)=0$, we have $\mathrm{dres}(f,\omega(\alpha),k)=D_k(\alpha)$.

\section{Iterated Hermite reduction} \label{sec:hermite}

It is immediate that the polynomial part of $f\in\Kx$ in \eqref{eq:partial-fraction-decomposition} is irrelevant, both for the study of summability as well as for the computation of discrete residues. So in this section we restrict our attention to proper rational functions $f\in\Kx$.

Our first task is to reduce to the case where $f$ has squarefree denominator. In this section we describe how to compute $f_k\in \Kx$ for $k\in\N$ such that, relative to the theoretical partial fraction decomposition \eqref{eq:partial-fraction-decomposition} of $f$, we have \begin{equation} \label{eq:hermite-list-output-k}
    f_k=\sum_{\alpha\in\overline\K}\frac{c_k(\alpha)}{x-\alpha}.
\end{equation} Of course we will then have by Definition~\ref{def:dres}\begin{equation}
    \label{eq:hermite-list-dres-k}\mathrm{dres}(f,\omega,k)=\mathrm{dres}(f_k,\omega,1)
\end{equation} for every $\omega\in\overline\K$ and $k\in\N$.

Our computation of the $f_k\in\Kx$ satisfying \eqref{eq:hermite-list-output-k} is based on iterating classical so-called \emph{Hermite reduction} algorithms, originally developed in \cite{Ostrogradsky1845,Hermite1872} and for which we refer to the fantastic modern reference \cite[\S2.2,\S2.3]{bronstein-book}.

\begin{defn} For proper $f\in\Kx$, the \emph{Hermite reduction} of \mbox{$f$ is} \[\mathtt{HermiteReduction}(f)=(g,h)\] where $g,h\in\Kx$ are proper rational functions such that \[f=\frac{d}{dx}(g)+h\] and $h$ has squarefree denominator.    
\end{defn}

The following Algorithm~\ref{alg:hermite-list} computes the $f_k\in\Kx$ satisfying \eqref{eq:hermite-list-output-k} by applying Hermite reduction iteratively and scaling the intermediate outputs appropriately.

\begin{algorithm}
\caption{$\mathtt{HermiteList}$ procedure}\label{alg:hermite-list}

\begin{algorithmic}

\Require A proper rational function $ 0\neq f\in\Kx$.
\Ensure A list $(f_1,\dots,f_m)$ of $f_k\in\Kx$ satisfying \eqref{eq:hermite-list-output-k}, such that $c_k(\alpha)=0$ for every $k>m$ and every $\alpha\in\overline\K$, with $f_m\neq 0$.
\\
\State Initialize loop: $m \gets 0$; $g \gets f$;

\While{$g \neq 0$}
\State {$(g, \hat{f}_{m+1}) \gets \mathtt{HermiteReduction}(g)$};
    \State $m \gets m+1$;
\EndWhile;
\State $f_k \gets (-1)^{k-1}(k-1)!\hat{f}_k$;\\
\Return $(f_1,\dots,f_m)$.
\end{algorithmic}
\end{algorithm}

\begin{lem} \label{lem:hermite-list}
    Algorithm~\ref{alg:hermite-list} is correct.
\end{lem}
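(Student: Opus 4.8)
The plan is to analyze the algorithm through the effect of a single $\mathtt{HermiteReduction}$ call on the complete partial fraction decomposition \eqref{eq:partial-fraction-decomposition}, working over $\overline\K$ even though the procedure itself runs over $\K$. The bridge between the two computations is that the defining property of Hermite reduction already pins down its output uniquely. So first I would record this uniqueness: if $f=\frac{d}{dx}(g)+h$ with $g,h\in\Kx$ proper and $h$ of squarefree denominator, then $g$ and $h$ are determined by $f$. This is because a derivative of a rational function has vanishing residues, so $\frac{d}{dx}(g)$ has all its poles of order $\geq 2$; hence if two pairs $(g_1,h_1)$ and $(g_2,h_2)$ both work, then $\frac{d}{dx}(g_1-g_2)=h_2-h_1$ is simultaneously a proper rational function with all poles of order $\geq 2$ and one with squarefree denominator (poles of order $\leq 1$), forcing it to be a proper polynomial, i.e.\ $0$; then $h_1=h_2$ and $g_1-g_2$ is a proper constant, so $g_1=g_2$. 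Consequently the genuinely $\K$-rational output of $\mathtt{HermiteReduction}$ must coincide, termwise over $\overline\K$, with whatever proper pair I exhibit.

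Next I would compute that pair explicitly from \eqref{eq:partial-fraction-decomposition}. The order-$1$ layer obstructs being a derivative, whereas every higher-order term is a pure derivative, namely
\[
\frac{c_k(\alpha)}{(x-\alpha)^k}=\frac{d}{dx}\left(\frac{-c_k(\alpha)}{(k-1)(x-\alpha)^{k-1}}\right)\qquad(k\geq 2).
\]
By the uniqueness just established, the Hermite reduction of $g=\sum_{k\in\N}\sum_{\alpha\in\overline\K}c_k(\alpha)/(x-\alpha)^k$ therefore extracts $h=\sum_{\alpha\in\overline\K}c_1(\alpha)/(x-\alpha)$ (the simple-pole part) and leaves the proper function whose coefficient of $(x-\alpha)^{-j}$ equals $-c_{j+1}(\alpha)/j$. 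In words: one step discards the order-$1$ layer and shifts every higher layer down by one in order, scaling the coefficient at former order $j+1$ by $-1/j$.

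Then I would push this bookkeeping through the $\mathtt{while}$ loop. A straightforward induction on the iteration count $r$ shows that the $r$-th extracted $\hat f_r$ has simple-pole coefficient $(-1)^{r-1}c_r(\alpha)/(r-1)!$ at each $\alpha\in\overline\K$, so the final rescaling $f_k\gets(-1)^{k-1}(k-1)!\,\hat f_k$ recovers exactly $f_k=\sum_{\alpha\in\overline\K}c_k(\alpha)/(x-\alpha)$, which is \eqref{eq:hermite-list-output-k}. For termination and the index bound, I would observe that each step strictly decreases the maximal pole order occurring in $g$ (order $j$ in the output arises only from order $j+1$ in the input); hence if $m$ is the largest $k$ with some $c_k(\alpha)\neq 0$, the loop halts after exactly $m$ iterations with $g=0$, while $\hat f_m\neq 0$ and $c_k(\alpha)=0$ for all $k>m$, matching the output specification. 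The one genuinely delicate point — and the step I would argue most carefully — is the passage between $\K$ and $\overline\K$: the per-step coefficient formula is transparent over $\overline\K$, but one must invoke the uniqueness above to guarantee that the algorithm's $\K$-rational intermediate outputs agree with it layer by layer over $\overline\K$; everything else is routine.
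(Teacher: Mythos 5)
Your proof is correct, but it takes a different route from the paper's. The paper's argument is global rather than per-step: it telescopes the defining relations $g_{k-1}=\frac{d}{dx}(g_k)+\hat{f}_k$ of \eqref{eq:hermite-list-induction} into the single identity $f=\sum_{k=1}^m\frac{d^{k-1}}{dx^{k-1}}(\hat{f}_k)$, and then invokes (asserted, not proved) the uniqueness of the decomposition \eqref{eq:hermite-list-condition} --- namely that the $f_k$ of \eqref{eq:hermite-list-output-k} are the \emph{only} squarefree-denominator functions with $f=\sum_k\frac{(-1)^{k-1}}{(k-1)!}\frac{d^{k-1}}{dx^{k-1}}(f_k)$ --- to conclude $\hat{f}_k=\frac{(-1)^{k-1}}{(k-1)!}f_k$ all at once, with no induction and no coefficient bookkeeping. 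You instead isolate the uniqueness of a \emph{single} Hermite reduction step (which the paper never states explicitly, though it is the natural local analogue of its global uniqueness claim), compute the exact action of one step on the partial-fraction layers over $\overline\K$ (order-$(j+1)$ coefficients move to order $j$ scaled by $-1/j$), and push the factor $(-1)^{r-1}/(r-1)!$ through the loop by induction; this is precisely the unrolled form of the paper's telescoped identity. The trade-off: the paper's telescoping is shorter and avoids layer-by-layer analysis, but leans on an unproved uniqueness assertion whose justification would require essentially the residue/pole-order argument you give; your version is more elementary and self-contained, and it makes explicit the one point both proofs must handle --- that the $\K$-rational outputs of the procedure agree with the decomposition defined over $\overline\K$, which your per-step uniqueness settles cleanly. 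Termination is handled identically in both (the maximal pole order drops by exactly one per pass, matching the paper's $\lVert g_k\rVert=\lVert g_{k-1}\rVert-1$), and your observation that $\hat{f}_m\neq 0$ because some $c_m(\alpha)\neq 0$ correctly accounts for the output specification $f_m\neq 0$.
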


\begin{proof}
    Letting $||f||= m\in\N$ denote the highest order of any pole of $ f\in\Kx$, note that $f_1,\dots,f_m\in\overline\K(x)$ defined by \eqref{eq:hermite-list-output-k} are uniquely determined by having squarefree denominator and satisfying \begin{equation}\label{eq:hermite-list-condition}f=\sum_{k=1}^m\frac{(-1)^{k-1}}{(k-1)!}\frac{d^{k-1}}{dx^{k-1}}(f_k).\end{equation} Defining inductively $g_0:=f$ and \begin{gather}\notag (g_{k},\hat{f}_{k}):=\mathtt{HermiteReduction}(g_{k-1}) \\ \Longleftrightarrow\qquad g_{k-1} = \frac{d}{dx}(g_{k})+\hat{f}_{k}\label{eq:hermite-list-induction}\end{gather} for $k\in\N$ as in Algorithm~\ref{alg:hermite-list}, we obtain by construction that all $g_k,\hat{f}_k\in\Kx$ and every $\hat{f}_k$ has squarefree denominator. Moreover, $||g_k||=||g_{k-1}||-1=m-k$, and therefore the algorithm terminates with $g_m=0$. Moreover, it follows from \eqref{eq:hermite-list-induction} that \[\sum_{k=1}^m\frac{d^{k-1}\hat{f}_k}{dx^{k-1}}=\sum_{k=1}^m\left(\frac{d^{k-1}g_{k-1}}{dx^{k-1}} - \frac{d^{k}g_k}{dx^{k}}\right)=g_0-\frac{d^mg_m}{dx^m}=f.\] Therefore the elements $(-1)^{k-1}(k-1)!\hat{f}^k$ are squarefree and satisfy \eqref{eq:hermite-list-condition}, so they agree with the $f_k\in\overline\K(x)$ satisfying \eqref{eq:hermite-list-output-k}.
\end{proof}

\begin{rem}\label{rem:hermite-list} As we mentioned in the introduction, we do not expect Algorithm~\ref{alg:hermite-list} to be surprising to the experts. What is surprising to us is that this trick is not used more widely since being originally suggested in \cite[\S5]{Horowitz1971}. We expect the theoretical cost of computing $\mathtt{HermiteList}(f)$ iteratively as in Algorithm~\ref{alg:hermite-list} is essentially the same as that of computing $\mathtt{HermiteReduction}(f)$ only once. This might seem counterintuitive, since the former is defined by applying the latter several times. But the size of the successive inputs in the loop decreases so quickly that the cost of the first step essentially dominates the added cost of the remaining steps put together. This conclusion is already drawn in \cite[\S5]{Horowitz1971} regarding the computational cost of computing iterated integrals of rational functions.
\end{rem}

\section{Simple Reduction}\label{sec:simple-reduction}

The results of the previous section allow us to further restrict our attention to proper rational functions $f\in\Kx$ with simple poles, which we write uniquely as $f=\frac{a}{b}$ with $a,b\in\K[x]$ such that $b$ is monic and squarefree, and either $a=0$ or else $0\leq\mathrm{deg}(a)<\mathrm{deg}(b)$.

Our next task is to compute a reduced form $\bar{f}\in\Kx$ such that $f-\bar{f}$ is rationally summable and $\bar{f}$ has squarefree denominator as well as polar dispersion $0$, which we accomplish in Algorithm~\ref{alg:simple-reduction}. As we mentioned already in the introduction, very many algorithms have been developed beginning with \cite{Abramov:1971} that can compute such a reduced form, even without assuming $f$ has simple poles.

Algorithm~\ref{alg:simple-reduction} requires the computation of the following set of integers, originally defined in \cite{Abramov:1971}.

\begin{defn} \label{def:shift-set}
    For $0\neq b\in\K[x]$, the \emph{(forward) shift set} of $b$ is \[\mathtt{ShiftSet}(b)=\{\ell\in\N \ | \ \mathrm{deg}(\mathrm{gcd}(b,\sigma^\ell(b)))\geq 1\}.\]
\end{defn}

 The following Algorithm~\ref{alg:shift-set} for computing $\mathtt{ShiftSet}(b)$ is based on the observation already made in \cite[p.~326]{Abramov:1971}, but with minor modifications to optimize the computations. 

\begin{algorithm}
\caption{$\mathtt{ShiftSet}$ procedure}\label{alg:shift-set}
\begin{algorithmic}
\Require  A polynomial $0\neq b\in\K$.
\Ensure  $\mathtt{ShiftSet}(b)$.\\
\If{$\mathrm{deg}(b) \leq 1$} {$S\gets\emptyset$;}
\Else \State{$R(z) \gets \mathrm{Resultant}_x(b(x), b(x+z))$;}\\
\State $\tilde{R}(z)\gets \dfrac{R(z)}{z\cdot \mathrm{gcd}_z\left(R(z), \frac{dR}{dz}(z)\right)};$ \Comment{Exact division.} \\
\State $T(z)\gets \tilde{R}(z^{\frac{1}{2}})$; \Comment{$\tilde{R}(z)$ is an even polynomial.}
\State{$S \gets \{\ell\in\mathbb{N} \ | \ T(\ell^2)=0\}$;}
\EndIf;\\
\Return $S$.
\end{algorithmic}
\end{algorithm}

\begin{lem}\label{lem:shift-set}
    Algorithm~\ref{alg:shift-set} is correct.
\end{lem}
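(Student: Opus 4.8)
The plan is to show that Algorithm~\ref{alg:shift-set} correctly computes $\mathtt{ShiftSet}(b)$ by establishing the key relationship between shifts $\ell$ for which $\gcd(b,\sigma^\ell(b))$ is non-constant and roots of the resultant $R(z)=\mathrm{Resultant}_x(b(x),b(x+z))$. The base case $\deg(b)\leq 1$ is immediate, since a polynomial of degree at most one has no two distinct roots differing by a positive integer, so the shift set is empty. For the main case, I would rely on the standard fact that, working over $\overline\K$, the resultant factors as $R(z)=\pm\prod_{i,j}(z-(\beta_i-\beta_j))$, where the $\beta_i$ range over the roots of $b$ (with multiplicity); hence for $\ell\in\N$, the value $R(\ell)=0$ if and only if some pair of roots $\beta_i,\beta_j$ of $b$ satisfies $\beta_i-\beta_j=\ell$, which is exactly the condition that $\gcd(b,\sigma^\ell(b))$ has positive degree, i.e. $\ell\in\mathtt{ShiftSet}(b)$.

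The remaining steps are justifications of the manipulations applied to $R(z)$ before extracting integer roots. First I would argue that $R(z)$ is an even polynomial: since the multiset of root-differences $\{\beta_i-\beta_j\}$ is symmetric under negation (swapping $i$ and $j$), we have $R(-z)=\pm R(z)$, and a degree/leading-coefficient check fixes the sign to give $R(-z)=R(z)$. This evenness is what makes the substitution $T(z)=\tilde R(z^{1/2})$ well-defined as a polynomial in $z$, so that the integer shifts $\ell$ are recovered precisely as those $\ell\in\N$ with $T(\ell^2)=0$. Next I would explain the two corrections in forming $\tilde R(z)$ from $R(z)$: dividing by $z$ removes the factor coming from the trivial differences $\beta_i-\beta_i=0$ (the diagonal pairs), which contribute the root $z=0$ but correspond to the excluded value $\ell=0\notin\N$; and dividing by $\gcd_z\!\left(R(z),\frac{dR}{dz}(z)\right)$ performs a squarefree reduction, collapsing repeated factors so that each genuine root-difference is recorded once. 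Crucially, neither operation changes the \emph{set} of roots of the surviving polynomial (apart from deliberately discarding $z=0$), so the positive integer roots of $\tilde R$ coincide with $\mathtt{ShiftSet}(b)$.

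The main obstacle I anticipate is a careful bookkeeping of the evenness and the squarefree division acting compatibly. One must verify that after dividing out $z$, the quotient $\tilde R(z)$ remains even (which it does, since $R(z)/z$ is odd and dividing an even polynomial's squarefree part is consistent with the symmetry; the cleaner route is to observe $R(z)=z^{2s}\cdot(\text{even squarefree part})$ after accounting for the full multiplicity of the diagonal, so that a single factor of $z$ suffices precisely because the squarefree reduction has already reduced the $z$-power to one). I would need to confirm that the exact division by $z$ is legitimate, i.e. that $z$ divides $R(z)/\gcd_z(R,\frac{dR}{dz})$ exactly once, which follows because $z=0$ is always a root of $R$ (the diagonal always contributes $\beta_i-\beta_i=0$, and $\deg(b)\geq 2$ here guarantees at least one such factor) and the squarefree reduction leaves it with multiplicity one. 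Assembling these observations, the positive integers $\ell$ with $T(\ell^2)=0$ are exactly those with $R(\ell)=0$ and $\ell\neq 0$, which by the resultant characterization are exactly the elements of $\mathtt{ShiftSet}(b)$, completing the proof.
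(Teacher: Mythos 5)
Your overall plan coincides with the paper's proof: characterize $\mathtt{ShiftSet}(b)$ as the set of positive integer roots of $R(z)=\mathrm{Resultant}_x(b(x),b(x+z))$ (which you derive from the factorization of the resultant into root differences, where the paper simply cites Abramov), observe that squarefree reduction preserves the root set, strip the root at $z=0$, and extract integer roots via $T(\ell^2)$. However, your justification of the evenness step contains a genuine error. $R(z)$ is \emph{not} an even polynomial in general: from $R(z)=c\prod_{i,j}\bigl(z-(\beta_i-\beta_j)\bigr)$ with $(\deg b)^2$ factors one gets $R(-z)=(-1)^{(\deg b)^2}R(z)=(-1)^{\deg b}R(z)$, so for $\deg(b)$ odd $R$ is an odd function, and no ``degree/leading-coefficient check'' can fix the sign. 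Concretely, for $b=x^3-x$ one computes $R(z)=z^3(z^2-1)^2(z^2-4)$, which satisfies $R(-z)=-R(z)$. Your fallback claim that $R(z)=z^{2s}\cdot(\text{even squarefree part})$ fails for the same reason: the multiplicity of the root $z=0$ is congruent to $\deg(b)$ modulo $2$ (the diagonal contributes $\deg(b)$ copies of the factor $z$, while off-diagonal coincidences of roots contribute in pairs), and in the same example the parenthetical ``$R(z)/z$ is odd'' also fails, since $z^2(z^2-1)^2(z^2-4)$ is even.

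The repair is exactly what the paper does: avoid any claim about $R$ as a function and use only the symmetry of its root \emph{set}, namely $R(\mu)=0$ if and only if $R(-\mu)=0$ for all $\mu\in\overline\K$. This symmetry is inherited by the squarefree part, and then by $\tilde R$ after dividing out the root at $0$, which is simple after squarefree reduction --- your observation that $z$ divides the squarefree part exactly once is correct and is precisely the right tool here. A squarefree polynomial whose roots are nonzero and closed under negation is of the form $c\prod_k(z^2-\mu_k^2)$, hence literally a polynomial in $z^2$, and this is what legitimizes setting $T(z)=\tilde R(z^{1/2})$. With that one substitution your argument goes through: the remaining steps (the resultant characterization of $\mathtt{ShiftSet}(b)$, preservation of the positive integer roots under both divisions, and the final identification $S=\mathtt{ShiftSet}(b)$) are sound and essentially identical to the paper's proof.
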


\begin{proof}
    As pointed out in \cite[p.~326]{Abramov:1971}, $\mathtt{ShiftSet}(b)$ is the set of positive integer roots of the resultant $R(z)\in\K[z]$ defined in Algorithm~\ref{alg:shift-set}, which is the same as the set of positive integer roots of the square-free part $R(z)/\mathrm{gcd}_z\left(R(z),\frac{dR}{dz}(z)\right)$. It is clear that $R(0)=0$, and since we do not care for this root, we are now looking for positive integer roots of the polynomial $\tilde{R}(z)$ defined in Algorithm~\ref{alg:shift-set}. It follows from the definition of $R(z)$ that $R(\ell)=0$ if and only if $R(-\ell)=0$ for every $\ell\in\overline\K$ (not just for $\ell\in\Z$), and we see that this property is inherited by $\tilde{R}(z)$. Since $z\nmid \tilde{R}(z)$, the even polynomial $\tilde{R}(z)=T(z^2)$ for a unique $T(z)\in \K[z]$. 
\end{proof}

\begin{rem} \label{rem:shift-set}
    The role of the assumption that $\mathbb{K}$ be \emph{canonical} (cf.~\S\ref{sec:notation}) is made only so that one can count on a procedure to compute $\mathtt{ShiftSet}(b)$. We note that in \cite[\S6]{Gerhard:2003} a much more efficient (and general) algorithm than Algorithm~\ref{alg:shift-set} is described, which works for $b\in\Z[x]$. We remark that in order to compute $\mathtt{ShiftSet}(b)$ in general, it is sufficient to be able to compute a basis $\{w_1,\dots,w_s\}$ of the $\mathbb{Q}$-vector subspace of $\K$ spanned by the coefficients of the auxiliary polynomial $T(z)\in\K[x]$ defined in Algorithm~\ref{alg:shift-set}. Indeed, we could then write $T(z)=\sum_{j=1}^s v_jT_j(z)$ with each $T_j(z)\in\mathbb{Q}[z]$ and simply compute the set of (square) integer roots of $\tilde{T}(z)=\mathrm{gcd}(T_1,\dots,T_s),$ which is the same as the set of (square) integer roots of $T(z)$.
\end{rem}

The previous Algorithm~\ref{alg:shift-set} to compute $\mathtt{ShiftSet}$ is called by the following Algorithm~\ref{alg:simple-reduction} to compute reduced forms of rational functions with squarefree denominators.

\begin{algorithm}
\caption{$\mathtt{SimpleReduction}$ procedure}\label{alg:simple-reduction}
\begin{algorithmic}
\Require  A proper rational function $f\in\Kx$ with squarefree denominator.
\Ensure  A proper rational function $\bar{f}\in\Kx$ with squarefree denominator, such that $f-\bar{f}$ is rationally summable and either $\bar{f}=0$ or $\mathrm{pdisp}(\bar{f})=0$.\\
\State $b\gets \mathrm{denom}(f)$;
\State $S\gets\mathrm{ShiftSet}(b)$;
\If{$S =\emptyset $} \State {$\bar{f}\gets f$;}
\Else \For{$\ell\in S$}
\State $g_\ell\gets \mathrm{gcd}(b,\sigma^{-\ell}(b))$;
\EndFor;
\State $G\gets  \mathrm{lcm}(g_\ell \ | \ \ell\in S)$;
\State $b_0\gets \frac{b}{G}$; \Comment{Exact division.} 
 \For{ $\ell\in S$} 
 \State $b_\ell \gets gcd( \sigma^{-\ell}(b_0),b)$; 
\EndFor;
\State{$N\gets\{0\}\cup\{\ell\in S \ | \ \mathrm{deg}(b_\ell)\geq 1\}$};
\State $(a_\ell \ | \ \ell\in N) \gets \mathtt{ParFrac}(f;b_\ell \ | \ \ell\in N);$
\State{$\bar{f}\gets \displaystyle\sum_{\ell\in N}\sigma^\ell\left(\frac{a_\ell}{b_\ell}\right);$}
\EndIf;\\
\textbf{return} $\bar{f}$. 
\end{algorithmic}
\end{algorithm}

\begin{prop}\label{prop:simple-reduction}
Algorithm~\ref{alg:simple-reduction} is correct.
\end{prop}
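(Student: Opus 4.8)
The plan is to show that the function $\bar{f}$ produced by Algorithm~\ref{alg:simple-reduction} satisfies the three required properties: it is proper with squarefree denominator, the difference $f - \bar{f}$ is rationally summable, and $\bar{f}$ has polar dispersion $0$ (or is zero). The case $S = \emptyset$ is immediate, since then $\mathrm{disp}(b) = 0$ by Definition~\ref{def:shift-set}, so setting $\bar{f} = f$ works trivially. The substance is in the case $S \neq \emptyset$, and the guiding principle is that each orbit $\omega \in \overline{\K}/\Z$ meeting the poles of $f$ should be collapsed onto a single representative pole, with the discrete residue of that orbit accumulated there; by the characterization \eqref{eq:general-reduced-form} of reduced forms and Proposition~\ref{prop:dres-sum}, this is exactly what guarantees summability of the difference together with polar dispersion $0$.

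First I would unpack the bookkeeping. Since $b$ is squarefree, the computation $G = \mathrm{lcm}(g_\ell \mid \ell \in S)$ with $g_\ell = \mathrm{gcd}(b, \sigma^{-\ell}(b))$ extracts exactly those monic factors of $b$ whose roots $\alpha$ satisfy $\alpha - \ell \in \{\text{roots of } b\}$ for some $\ell \in S$; that is, the roots of $G$ are precisely the poles $\alpha$ of $f$ that are \emph{not} the minimal element of their orbit among the poles (the ones admitting a pole strictly to their left within distance $\mathrm{disp}(b)$). Consequently $b_0 = b/G$ collects exactly the orbit-minimal poles, one representative per orbit of poles. The key structural claim I would isolate and prove is that $\{b_\ell \mid \ell \in N\}$ forms a set of pairwise coprime monic factors whose product is $b$: each pole $\alpha$ of $f$ lies in a unique orbit with orbit-minimal representative a root of $b_0$, and if $\alpha = \beta + \ell$ for that minimal root $\beta$ then $\beta$ is a root of $\sigma^{-\ell}(b_0)$ and $\alpha$ is a root of $b$, so $\alpha$ is a root of $b_\ell = \mathrm{gcd}(\sigma^{-\ell}(b_0), b)$ — here one must check that $\ell$ is the \emph{unique} shift realizing this, which follows because $b_0$ meets each orbit only once. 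This coprimality-and-product claim is what licenses the call to $\mathtt{ParFrac}(f; b_\ell \mid \ell \in N)$ via \eqref{eq:parfrac-def}, giving $f = \sum_{\ell \in N} \frac{a_\ell}{b_\ell}$.

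Next I would verify the two analytic properties. For summability of $f - \bar{f}$, observe that $\bar{f} = \sum_{\ell \in N} \sigma^\ell\!\left(\frac{a_\ell}{b_\ell}\right)$, so termwise $\frac{a_\ell}{b_\ell} - \sigma^\ell\!\left(\frac{a_\ell}{b_\ell}\right)$ is a telescoping difference $-\sum_{j=0}^{\ell-1}\Delta\!\left(\sigma^j(\tfrac{a_\ell}{b_\ell})\right)$, hence summable; summing over $\ell \in N$ shows $f - \bar{f} = \Delta(g)$ for an explicit $g \in \Kx$. This is the cleanest route, as it exhibits summability directly rather than invoking \eqref{eq:dres-def}. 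For the polar dispersion of $\bar{f}$: the denominator of $\sigma^\ell(a_\ell/b_\ell)$ is $\sigma^\ell(b_\ell)$, and since $b_\ell \mid \sigma^{-\ell}(b_0)$ we get $\sigma^\ell(b_\ell) \mid b_0$, so every pole of $\bar{f}$ is a root of $b_0$, i.e.\ an orbit-minimal pole; distinct such poles lie in distinct orbits by construction, which forces $\mathrm{disp}(\mathrm{denom}(\bar{f})) = 0$, giving $\mathrm{pdisp}(\bar{f}) = 0$ unless $\bar{f} = 0$. That the denominator of $\bar{f}$ is squarefree follows since it divides the squarefree $b_0$.

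The main obstacle I anticipate is the structural claim about $\{b_\ell \mid \ell \in N\}$ being a pairwise-coprime factorization of $b$ — specifically the subtlety that a single pole $\alpha$ might a priori be reachable from $b_0$ by more than one shift, or that indexing by $\ell \in N$ rather than by orbits could double-count. Resolving this cleanly requires arguing that because $b_0$ contains exactly one pole from each orbit (the minimal one), the shift $\ell$ carrying that minimal pole to any given pole $\alpha$ in the same orbit is \emph{uniquely determined}, so the $b_\ell$ partition the roots of $b$ without overlap. I would handle this by passing to $\overline{\K}$ and reasoning orbit-by-orbit on the roots, where the shift structure is transparent, and then descending the conclusion (pairwise coprimality and $\prod_{\ell \in N} b_\ell = b$) back to $\K[x]$ where it is a statement about monic polynomials independent of the splitting field.
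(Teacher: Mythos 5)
Your proof is correct and takes essentially the same route as the paper's: identify $b_0$ as the divisor of orbit-minimal (``initial'') roots, establish that $\{b_\ell \mid \ell\in N\}$ is a pairwise-coprime factorization of $b$ (a step the paper dismisses as ``clear'' but you rightly flesh out via uniqueness of the shift $\ell$), then conclude by partial fractions, the telescoping identity for summability of $f-\bar{f}$, and the divisibility $\sigma^\ell(b_\ell)\mid b_0$ together with $\mathrm{disp}(b_0)=0$ for $\mathrm{pdisp}(\bar{f})=0$. One small slip worth fixing: since $\sigma^{-\ell}(b_0)(x)=b_0(x-\ell)$, it is $\alpha=\beta+\ell$ (not $\beta$) that is a root of $\sigma^{-\ell}(b_0)$, though your conclusion that $\alpha$ is a root of $b_\ell$ is exactly right.
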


\begin{proof}
    As in Algorithm~\ref{alg:simple-reduction}, let $b$ denote the denominator of $f$ and let $S=\mathtt{ShiftSet}(b)$. Then indeed if $S=\emptyset$ $\mathrm{pdisp}(f)=0$ so $f$ is already reduced and there is nothing to do. Assume from now on that $S\neq\emptyset$, and let us consider roots of polynomials in $\overline\K$. For each $\ell\in S$, the roots of $g_\ell:=\mathrm{gcd}(b,\sigma^{-\ell}(b))$ are those roots $\alpha$ of $b$ such that $\alpha-\ell$ is also a root of $b$. Therefore the roots of $G=\mathrm{lcm}(g_\ell:\ell\in S)$ are those roots $\alpha$ of $b$ such that $\alpha-\ell$ is also a root of $b$ for some $\ell\in\N$ (because all possible such $\ell$ belong to $S$, by the definition of $S$). It follows that the roots of $b_0:=b/G$ are those roots $\alpha$ of $b$ such that $\alpha-\ell$ is \emph{not} a root of $b$ for any $\ell\in\N$. In particular, $\mathrm{disp}(b_0)=0$. We call $b_0$ the \emph{divisor of initial roots}.
    
    Now the roots of $b_\ell:=\mathrm{gcd}(\sigma^{-\ell}(b_0),b)$ are those roots $\alpha$ of $b$ such that $\alpha-\ell$ is a root of $b_0$, i.e., the roots of $b$ which are precisely $\ell$ shifts away from the initial root in their respective $\mathbb{Z}$-orbits. It may happen that $b_\ell=1$ for some $\ell\in S$, because even though each $\ell\in S$ is the difference between two roots of $b$, it might be that no such pair of roots of $b$ involves any initial roots of $b_0$. Writing $N:=\{0\}\cup \{\ell\in S \ | \ \mathrm{deg}(b_\ell)=1\}$, it is clear that \begin{equation}\label{eq:simple-reduction-factorization}\prod_{\ell\in N}b_\ell=b\qquad \text{and} \qquad \mathrm{gcd}(b_\ell,b_j)=1 \ \text{for}\ \ell\neq j.\end{equation} Therefore we may uniquely decompose $f$ into partial fractions as in \eqref{eq:parfrac-def} with respect to the factorization \eqref{eq:simple-reduction-factorization} as called by Algorithm~\ref{alg:simple-reduction} \[f:=\sum_{\ell\in N} \frac{a_\ell}{b_\ell}\qquad \text{and set} \qquad \bar{f}:= \displaystyle\sum_{\ell\in N}\sigma^\ell\left(\frac{a_\ell}{b_\ell}\right).\]
    
    Now this $\bar{f}$ is a sum of proper rational functions with squarefree denominators, whence $\bar{f}$ also is proper with squarefree denominator.
    Since $\sigma(b_\ell)=\mathrm{gcd}(b_0,\sigma^\ell(b))$ is a factor of $b_0$ for each $\ell\in N$ and $\mathrm{disp}(b_0)=0$, we conclude that $\mathrm{pdisp}(\bar{f})=0$. Finally, for each $\ell\in N-\{0\}$ we see that \[\sigma^\ell\left(\frac{a_\ell}{b_\ell}\right)-\frac{a_\ell}{b_\ell}=\sum_{i=0}^{\ell-1}\sigma^i\left(\sigma\left(\frac{a_\ell}{b_\ell}\right)-\frac{a_\ell}{b_\ell}\right),\] whence $f-\bar{f}$ is a sum of rationally summable elements, and is therefore itself rationally summable.
\end{proof}

\begin{rem}\label{rem:simple-reduction}
  As we stated in the introduction, Algorithm~\ref{alg:simple-reduction} strikes us as being conceptually similar to the one already developed in \cite[\S5]{Gerhard:2003}, but its description is made simpler by our restriction to rational functions with simple poles only. Having a procedure that is easier for humans to read is not necessarily a computational virtue. But it is so in this case, because the relative simplicity of Algoritmhm~\ref{alg:simple-reduction} makes it also nimble and adaptable, enabling us in \S\ref{sec:extensions} to easily modify it to address other related applications beyond summability.  
\end{rem}

\section{Computation of discrete residues}

Now we wish to put together the algorithms presented in the earlier sections to compute symbolically the all the discrete residues of an arbitrary proper $f\in\Kx$, in the sense described in \S\ref{sec:dres-sum}. In order to do this, we first recall the following result describing the sense in which we compute classical residues symbolically by means of an auxiliary polynomial, and its short proof which explains how to actually compute this polynomial in practice.

\begin{lem}[\protect{\cite[Lem.~5.1]{trager:1976}}]\label{lem:trager}
    Let $f=\frac{a}{b}\in\Kx$ such that $a,b\in\K[x]$ satisfy $a\neq 0$, $\mathrm{deg}(a)<\mathrm{deg}(b)$, $\mathrm{gcd}(a,b)=1$, and $b$ is squarefree. Then there exists a unique polynomial $0\neq r\in\K[x]$ such that $\mathrm{deg}(r)<\mathrm{deg}(b)$ and \[f=\sum_{\{\alpha\in\overline\K \ | \ b(\alpha)=0\}}\frac{r(\alpha)}{x-\alpha}.\]
\end{lem}

\begin{proof}
    Since the set of poles of $f$ is the set of roots of $b$ and they are all simple poles, we know that the first-order residue $c_1(\alpha)$ of $f$ at each $\alpha\in\overline\K$ such that $b(\alpha)=0$ satisfies $0\neq c_1(\alpha)=a(\alpha)/\frac{db}{dx}(\alpha).$ Using the extended Euclidean algorithm we find the unique $0\neq r$ in $\K[x]$ with $\mathrm{deg}(r)<\mathrm{deg}(b)$ such that $r\cdot\frac{d}{dx}(b)\equiv a \pmod{b}.\qedhere$
\end{proof}

For $f\in\Kx$ satisfying the hypotheses of Lemma~\ref{lem:trager}, we denote \begin{equation}\label{eq:first-residues-def}\mathtt{FirstResidues}(f):=(b,r),\end{equation} where $r,b\in\K[x]$ are also as in the notation of Lemma~\ref{lem:trager}. We also define $\mathtt{FirstResidues}(0):=(1,0)$, for convenience. With this, we can now describe the following simple Algorithm~\ref{alg:dres} to compute a symbolic representation of the discrete residues of $f$.

\begin{algorithm}
\caption{$\mathtt{DiscreteResidues}$ procedure}\label{alg:dres}
\begin{algorithmic}
\Require  A proper rational function $0\neq f\in\Kx$.
\Ensure A list $\bigl((B_1,D_1),\dots,(B_m,D_m) \bigr)$ of pairs $(B_k,D_k)\in\K[x]^2$ such that every non-zero discrete residue of $f$ is of order at most $m$ and, for each $k=1,\dots,m$:
\begin{enumerate}
\item either $(B_k,D_k)=(1,0)$ or else $D_k\neq 0$, $\mathrm{deg}(D_k)<\mathrm{deg}(B_k)$, $B_k$ is squarefree, and $\mathrm{disp}(B_k)=0$;
\item the set of roots of $B_k$ in $\overline\K$ contains precisely one representative from each $\omega\in\overline\K/\Z$ such that $\mathrm{dres}(f,\omega,k)\neq 0$; and
\item $D_k(\alpha)=\mathrm{dres}(f,\omega(\alpha),k))$ for each root $\alpha\in\overline\K$ of $B_k$. \end{enumerate}\\

\State $(f_1,\dots,f_m)\gets \mathtt{HermiteList}(f)$;
\For{$k=1..m$}
\State $\bar{f}_k\gets\mathtt{SimpleReduction}(f_k)$;
\State $(B_k,D_k)\gets \mathtt{FirstResidues}(\bar{f}_k)$;
\EndFor;\\
\textbf{return} $\bigl((B_1,D_1),\dots,(B_m,D_m) \bigr)$. 
\end{algorithmic}
\end{algorithm}

\begin{thm}
\label{thm:dres}
    Algorithm~\ref{alg:dres} is correct.
\end{thm}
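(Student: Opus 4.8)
The plan is to verify that Algorithm~\ref{alg:dres} produces output satisfying conditions (1)--(3) by tracking how the discrete residue data of $f$ is transformed at each of the three stages, invoking the correctness results already established for the subroutines. Since each piece of machinery has been proved correct in isolation (Lemma~\ref{lem:hermite-list} for $\mathtt{HermiteList}$, Proposition~\ref{prop:simple-reduction} for $\mathtt{SimpleReduction}$, and Lemma~\ref{lem:trager} for $\mathtt{FirstResidues}$), the real content of the theorem is bookkeeping: confirming that the \emph{composition} of these steps computes the right invariant, namely $\mathrm{dres}(f,\omega,k)$, and that the hypotheses required by each subroutine are genuinely satisfied by the output of the previous one.

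First I would fix $k\in\{1,\dots,m\}$ and trace the data through the loop. By Lemma~\ref{lem:hermite-list}, the function $f_k$ satisfies \eqref{eq:hermite-list-output-k}, so that $f_k=\sum_{\alpha\in\overline\K}\frac{c_k(\alpha)}{x-\alpha}$ has squarefree denominator and, crucially, satisfies the key identity \eqref{eq:hermite-list-dres-k}, namely $\mathrm{dres}(f,\omega,k)=\mathrm{dres}(f_k,\omega,1)$ for every orbit $\omega$. This reduces the problem to computing the \emph{first-order} discrete residues of the simple-pole function $f_k$. Next, applying Proposition~\ref{prop:simple-reduction}, the function $\bar f_k=\mathtt{SimpleReduction}(f_k)$ is proper with squarefree denominator, has $f_k-\bar f_k$ rationally summable, and satisfies either $\bar f_k=0$ or $\mathrm{pdisp}(\bar f_k)=0$. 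I would then argue that the first-order discrete residues are preserved: because $f_k-\bar f_k=\Delta(g)$ for some $g$, Proposition~\ref{prop:dres-sum} forces every discrete residue of $f_k-\bar f_k$ to vanish, hence $\mathrm{dres}(f_k,\omega,1)=\mathrm{dres}(\bar f_k,\omega,1)$ for all $\omega$. Since $\mathrm{pdisp}(\bar f_k)=0$, the poles of $\bar f_k$ lie in distinct $\Z$-orbits, so each orbit $\omega$ contains \emph{at most one} pole $\alpha$ of $\bar f_k$, and there $\mathrm{dres}(\bar f_k,\omega,1)=c_1(\alpha)$ is just the ordinary first-order residue. This is the conceptual heart of the whole algorithm: the shift-freeness collapses the orbit-sum defining the discrete residue to a single classical residue.

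Finally I would invoke Lemma~\ref{lem:trager}: writing $\bar f_k=\frac{a}{b}$ in lowest terms with $b$ squarefree, the pair $(B_k,D_k)=\mathtt{FirstResidues}(\bar f_k)=(b,r)$ satisfies $\bar f_k=\sum_{b(\alpha)=0}\frac{r(\alpha)}{x-\alpha}$, so the classical residue at each pole $\alpha$ of $B_k$ equals $D_k(\alpha)$. Combining the three identities gives $\mathrm{dres}(f,\omega(\alpha),k)=\mathrm{dres}(\bar f_k,\omega(\alpha),1)=D_k(\alpha)$, which is exactly condition (3), while the squarefreeness of $b=B_k$ together with $\mathrm{disp}(B_k)=\mathrm{pdisp}(\bar f_k)=0$ yields the structural requirements in (1); condition (2) follows because the roots of $B_k$ are precisely the poles of $\bar f_k$, one per orbit with nonzero residue. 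The degenerate case $\bar f_k=0$ is handled by the convention $\mathtt{FirstResidues}(0)=(1,0)$, and one must check this corresponds correctly to ``all order-$k$ discrete residues vanish.''

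The step I expect to be the main obstacle is the invariance claim $\mathrm{dres}(f_k,\omega,1)=\mathrm{dres}(\bar f_k,\omega,1)$, or more precisely making airtight the passage from ``$f_k-\bar f_k$ is summable'' to ``their first-order discrete residues agree orbit-by-orbit.'' This uses Proposition~\ref{prop:dres-sum} in the direction that summability \emph{implies} all discrete residues vanish, combined with the $\K$-linearity of $\mathrm{dres}(-,\omega,1)$ as a functional on rational functions; I would want to state this linearity explicitly (it follows from the linearity of partial fraction decomposition) since it is the glue that lets me subtract. A secondary subtlety worth flagging is why the orbit-sum in \eqref{eq:dres-def} reduces to a single term for $\bar f_k$: this requires that $\mathrm{pdisp}(\bar f_k)=0$ precludes two poles of $\bar f_k$ differing by an integer, which is exactly the definition of polar dispersion zero, so no collision of residues within an orbit can occur. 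Once these two points are nailed down, conditions (1)--(3) assemble directly from the quoted results.
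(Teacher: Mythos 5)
Your proposal is correct and follows essentially the same route as the paper's own proof: reduce via $\mathtt{HermiteList}$ and \eqref{eq:hermite-list-dres-k} to first-order residues of $f_k$, transfer them to $\bar f_k$ using the summability of $f_k-\bar f_k$ together with Proposition~\ref{prop:dres-sum} and linearity of $\mathrm{dres}$, collapse the orbit sum to a single classical residue via $\mathrm{pdisp}(\bar f_k)=0$, and evaluate with Lemma~\ref{lem:trager}, with the same case split on $\bar f_k=0$ (where your remaining check is immediate, since $f_k$ itself is then summable). The invariance step you flag as the main obstacle is precisely the step the paper invokes in the same way, so there is no gap.
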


\begin{proof}
    It follows from the correctness of Algorithm~\ref{alg:hermite-list} proved in Lemma~\ref{lem:hermite-list} that $f$ has no poles of order greater than $m$, whence by Definition~\ref{def:dres} every non-zero discrete residue of $f$ has order at most $m$. Consider now $\bar{f}_k:=\mathtt{SimpleReduction}(f_k)$, which by the correctness of Algorithm~\ref{alg:simple-reduction} is such that $f-\bar{f}$ is summable. We prove the correctness of Algorithm~\ref{alg:dres} for each $k=1,\dots,m$ depending on whether $\bar{f}_k=0$ or not.

    In case $\bar{f}_k=0$, Algorithm~\ref{alg:dres} produces $(B_k,D_k)=(1,0)$. In this case we also know that $f_k$ is summable, and therefore by \eqref{eq:hermite-list-dres-k} $\mathrm{dres}(f_k,\omega,1)=\mathrm{dres}(f,\omega,k)=0$ for every $\omega\in\overline\K/\Z$. Thus the output of Algorithm~\ref{alg:dres} is (vacuously) correct in this case because the constant polynomial $B_k=1$ has no roots.

    Suppose now that $\bar{f}_k\neq 0$. It follows from the definition of $(B_k,D_k):=\mathtt{FirstResidues}(\bar{f}_k)$ as in \eqref{eq:first-residues-def} that $B_k$ is the denominator of the proper rational function $\bar{f}_k$, and therefore $B_k$ is non-constant, squarefree, and has $\mathrm{disp}(B_k)=0$, by the correctness of Algorithm~\ref{alg:simple-reduction} proved in Lemma~\ref{prop:simple-reduction}. Let us denote by $\bar{c}_k(\alpha)$ the classical first order residue of $\bar{f}_k$ at each $\alpha\in\overline\K$ (note that $\bar{f}_k$ has only simple poles, so there are no other residues). We obtain from Lemma~\ref{lem:trager} that $D_k\neq 0$, $\mathrm{deg}(D_k)<\mathrm{deg}(B_k)$, and $D_k(\alpha)=\bar{c}_k(\alpha)$ for each root $\alpha$ of $B_k$. Since $\bar{f}_k$ has at most one pole in each orbit $\omega\in\overline\K/\Z$ (this is what $\mathrm{pdisp}(\bar{f}_k)=0$ means), it follows that $\bar{c}_k(\alpha)=\mathrm{dres}(\bar{f}_k,\omega(\alpha),1)$ for every $\alpha\in\overline\K$. To conclude, we observe that \[\mathrm{dres}(\bar{f}_k,\omega,1)=\mathrm{dres}(f_k,\omega,1)=\mathrm{dres}(f,\omega,k)\] for each $\omega\in\overline\K/\Z$; the first equality follows from the summability of $f_k-\bar{f}_k$, and the second equality is \eqref{eq:hermite-list-dres-k}.
\end{proof}

\begin{rem}\label{rem:dres-deficiency}
    As we mentioned in \S\ref{sec:dres-sum}, the knowledge of a reduced form $h$ for $f$ is morally ``the same'' as knowledge of the discrete residues of $f$. And yet, the output $\bigl((B_1,D_1),\dots,(B_m,D_m) \bigr)$ of Algorithm~\ref{alg:dres} has the following deficiency: it may happen that for some $j\neq k$, we have $\mathrm{dres}(f,\omega,k)\neq 0 \neq \mathrm{dres}(f,\omega,j)$, and yet the representatives $\alpha_j,\alpha_k\in\omega$ such that $B_j(\alpha_j)=0=B_k(\alpha_k)$ may be distinct, with $\alpha_j\neq\alpha_k$. In many applications, this is not an issue because summability problems decompose into parallel summability problems in each degree component, as we see from Proposition~\ref{prop:dres-sum}. Actually, the systematic exploitation of this particularity was the original motivation of Algorithm~\ref{alg:hermite-list} and remains its \emph{raison d'\^etre}. But it is still unsatisfying that the different $B_k$ associated to the same $f$ are not better coordinated, and this does become a more serious (no longer merely aesthetic) issue in further applications to creative telescoping, where the discrete residues of different degrees begin to interact. We explain how to address this problem in Remark~\ref{rem:dres-deficiency-fix}, when we have developed the requisite technology.
\end{rem}

\section{Extensions and applications}\label{sec:extensions}

In this section we collect some modifications to the procedures described in the previous sections to produce outputs that allow for more immediate comparison of discrete residues accross several rational functions and accross different orders. 

We begin with the parameterized summability problem \eqref{eq:multi-telescoper} described in the introduction. Let $\mathbf{f}=(f_1,\dots,f_n)\in\Kx^n$ be given, and suppose we wish to compute a $\mathbb{K}$-basis for \begin{equation}\label{eq:v-space}V(\mathbf{f}):=\left\{\mathbf{v}\in\K^n \ \middle| \ \mathbf{v}\cdot\mathbf{f} \ \text{is summable}\right\}.\end{equation} By Proposition~\ref{prop:dres-sum}, \vspace{-.1in}\begin{equation}\label{eq:v-space-test}\mathbf{v}=(v_1,\dots,v_n)\in V(\mathbf{f}) \quad\Longleftrightarrow\quad \sum_{i=1}^nv_i\cdot\mathrm{dres}(f_i,\omega,k)=0\vspace{-.05in}\end{equation} for every $\omega\in\overline\K/\Z$ and every $k\in\N$, which is a linear system that we will be able to solve for the unknown $\mathbf{v}$ as soon as we know how to write it down. If we apply Algorithm~\ref{alg:dres} to each $f_i$ we obtain \[\mathtt{DiscreteResidues}(f_i)=\bigl((B_{i,1},D_{i,1}),\dots,(B_{i,m_i},D_{i,m_i})),\] and we run into the horrendous bookkeeping problem of having to decide, for each fixed $k\leq\max\{m_i \ | \ i=1,\dots,n\}$, for which orbits $\omega\in\overline\K/\Z$ it might happen that we have several different $\alpha_i\in\omega$ which are roots of $B_{i,k}$ but are not equal to one another on the nose.

To address this kind of problem, we introduce in Algorithm~\ref{alg:multi-reduction} a generalization of Algorithm~\ref{alg:simple-reduction} that computes reduced forms for several $f_1,\dots,f_n$ compatibly, so that whenever $f_i$ and $f_j$ have non-zero residue of order a given $k$ at a given orbit $\omega$ if and only if $B_{i,k}$ and $B_{j,k}$ have a common root $\alpha\in\overline\K$ such that $\omega=\omega(\alpha)$. For this purpose, we may assume as in \S\ref{sec:simple-reduction} that the $f_i$ are proper and, thanks to Algorithm~\ref{alg:hermite-list}, that they all have squarefree denominators.

\begin{algorithm}
\caption{$\mathtt{SimpleReduction}^+$ procedure}\label{alg:multi-reduction}
\begin{algorithmic}
\Require  An $n$-tuple $(f_1,\dots,f_n)\in\Kx^n$ of proper rational functions with squarefree denominators.
\Ensure  An $n$-tuple $(\bar{f}_1,\dots,\bar{f}_n)\in\Kx^n$ of proper rational functions with squarefree denominators, such that: each $f_i-\bar{f}_i$ is rationally summable; either $\bar{f}_i=0$ or $\mathrm{pdisp}(\bar{f}_i)=0$ for each $i$; and for every $1\leq i,j\leq n$ such that $\mathrm{dres}(f_i,\omega,1)\neq 0 \neq \mathrm{dres}(f_j,\omega,1)$ we have that $\bar{f}_i$ and $\bar{f}_j$ share a common pole in $\omega$.\\
 
 \State $(b_1,\dots,b_n)\gets \bigl(\mathrm{denom}(f_1),\dots,\mathrm{denom}(f_n)\bigr)$;
\State $b\gets\mathrm{lcm}(b_1,\dots,b_n)$
\State $S\gets\mathrm{ShiftSet}(b)$;
\If{$S =\emptyset $} \State {$(\bar{f}_1,\dots,\bar{f}_n)\gets (f_1,\dots,f_n)$;}
\Else \For{$\ell\in S$}
\State $g_\ell\gets \mathrm{gcd}(b,\sigma^{-\ell}(b))$;
\EndFor;
\State $G\gets  \mathrm{lcm}(g_\ell \ | \ \ell\in S)$;
\State $b_0\gets \frac{b}{G}$; \Comment{Exact division.}
\For{ $i=1..n$}
 \For{ $\ell\in S\cup \{0\}$} 
 \State $b_{i,\ell} \gets \mathrm{gcd}( \sigma^{-\ell}(b_0),b_i)$; 
\EndFor;
\State{$N_i\gets\{0\}\cup\{\ell\in S \ | \ \mathrm{deg}(b_{i,\ell})\geq 1\}$};
\State $(a_{i,\ell} \ | \ \ell\in N_i) \gets \mathtt{ParFrac}(f_i;b_{i,\ell} \ | \ \ell\in N_i);$
\State{$\bar{f}_i\gets \displaystyle\sum_{\ell\in N_i}\sigma^\ell\left(\frac{a_{i,\ell}}{b_{i,\ell}}\right);$}
\EndFor;\EndIf;\\
\textbf{return} $(\bar{f}_1,\dots,\bar{f}_n)$. 
\end{algorithmic}
\end{algorithm}

\begin{cor}\label{cor:multi-reduction}
    Algorithm~\ref{alg:multi-reduction} is correct.
\end{cor}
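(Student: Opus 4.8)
The plan is to mirror the proof of Proposition~\ref{prop:simple-reduction}, treating Algorithm~\ref{alg:multi-reduction} as the simultaneous application of Algorithm~\ref{alg:simple-reduction} to each $f_i$, but with the crucial modification that the shift set, the divisor of initial roots $b_0$, and the orbit structure are all computed relative to the common denominator $b=\mathrm{lcm}(b_1,\dots,b_n)$ rather than the individual $b_i$. First I would dispose of the trivial case $S=\emptyset$: here $\mathrm{disp}(b)=0$, so each $b_i$ divides $b$ forces $\mathrm{pdisp}(f_i)=0$, and the output $\bar{f}_i=f_i$ already satisfies every required property (the shared-pole condition holding vacuously or trivially since each orbit contains at most one pole of $b$).

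For the main case $S\neq\emptyset$, I would first establish the analogue of the factorization identity \eqref{eq:simple-reduction-factorization} for each fixed $i$. The key observation is that the divisor of initial roots $b_0=b/G$ is computed once, from the common $b$, so its roots are exactly those roots $\alpha$ of $b$ that are $\Z$-minimal, meaning $\alpha-\ell$ is not a root of $b$ for any $\ell\in\N$. Since each $b_i\mid b$, the roots of $b_{i,\ell}=\mathrm{gcd}(\sigma^{-\ell}(b_0),b_i)$ are precisely those roots $\alpha$ of $b_i$ for which $\alpha-\ell$ is a $\Z$-minimal root of $b$. Because every root of $b_i$ lies in some orbit whose $\Z$-minimal representative (with respect to $b$) is unique and sits at a well-defined nonnegative shift $\ell$, with $\ell\in S\cup\{0\}$, I would argue exactly as in Proposition~\ref{prop:simple-reduction} that $\prod_{\ell\in N_i}b_{i,\ell}=b_i$ with the factors pairwise coprime. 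The partial fraction decomposition and the summability of $f_i-\bar{f}_i$ then follow verbatim from the corresponding steps in the proof of Proposition~\ref{prop:simple-reduction}, as does $\mathrm{pdisp}(\bar{f}_i)=0$, since each $\sigma^\ell(b_{i,\ell})$ divides $b_0$ and $\mathrm{disp}(b_0)=0$.

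The genuinely new part, and the one I expect to be the main obstacle, is the compatibility claim: that if $\mathrm{dres}(f_i,\omega,1)\neq 0\neq\mathrm{dres}(f_j,\omega,1)$, then $\bar{f}_i$ and $\bar{f}_j$ share a common pole lying in $\omega$. The point to exploit is that $\bar{f}_i$ is obtained by shifting each piece $\frac{a_{i,\ell}}{b_{i,\ell}}$ forward by $\ell$, so that its poles are exactly the roots of $\sigma^\ell(b_{i,\ell})=\mathrm{gcd}(b_0,\sigma^\ell(b_i))$, all of which are $\Z$-minimal roots of the \emph{common} $b$. I would argue that a nonzero residue of $f_i$ at $\omega$ forces $\omega$ to contain a root of $b_i$, hence its unique $\Z$-minimal representative $\alpha_\omega$ (relative to $b$) is a root of $b_0$, and moreover that $\alpha_\omega$ is in fact a pole of $\bar{f}_i$: the reduction collects the entire residue of $f_i$ in the orbit $\omega$ onto this single canonical representative $\alpha_\omega$, so nonvanishing of $\mathrm{dres}(f_i,\omega,1)$ guarantees the residue of $\bar{f}_i$ at $\alpha_\omega$ is nonzero. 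Since $\alpha_\omega$ depends only on $\omega$ and on the common $b_0$ — not on $i$ — the same canonical point $\alpha_\omega$ serves simultaneously for $\bar{f}_j$, yielding the shared pole. The delicate step here is verifying that passing to the reduced form genuinely preserves (rather than possibly cancels) the orbit-wise residue at $\alpha_\omega$; this is where I would invoke the relation $\mathrm{dres}(\bar{f}_i,\omega,1)=\mathrm{dres}(f_i,\omega,1)$ coming from the summability of $f_i-\bar{f}_i$ together with Proposition~\ref{prop:dres-sum}, exactly as in the proof of Theorem~\ref{thm:dres}, so that the residue of $\bar{f}_i$ at its unique pole in $\omega$ equals $\mathrm{dres}(f_i,\omega,1)\neq 0$, and symmetrically for $j$.
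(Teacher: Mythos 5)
Your proposal is correct and takes essentially the same approach as the paper's proof, which likewise treats Algorithm~\ref{alg:multi-reduction} as a simultaneous run of Algorithm~\ref{alg:simple-reduction} whose key difference is that $b_0$ is computed from the common denominator $b=\mathrm{lcm}(b_1,\dots,b_n)$, so that the roots of $b_{i,\ell}$ are the roots of $b_i$ exactly $\ell$ steps above the unique root of $b_0$ in their orbit and every $\bar{f}_i$ has denominator dividing $b_0$ with $\mathrm{disp}(b_0)=0$. Your explicit justification of the shared-pole property via $\mathrm{dres}(\bar{f}_i,\omega,1)=\mathrm{dres}(f_i,\omega,1)$ (from summability of $f_i-\bar{f}_i$ and Proposition~\ref{prop:dres-sum}) merely spells out a step the paper's sketch leaves implicit, and does so correctly.
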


\begin{proof}
The proof is very similar to that of Proposition~\ref{prop:simple-reduction}, so we only sketch the main points. The key difference is that now $b_0$ has been defined so that for each root $\alpha$ of $b_0$ and each root $\alpha_i$ of $b_i$ belonging to $\omega(\alpha)$ we have that $\alpha_i-\alpha\in\mathbb{Z}_{\geq 0}$. The roots of $b_{i,\ell}$ are precisely those roots of $b_i$ which are $\ell$ steps away from the unique root of $b_0$ that belongs to the same orbit. By construction, the denominator of each $\bar{f}_i$ is a factor of $b_0$, which has $\mathrm{disp}(b_0)=0$ as before. \end{proof}

\begin{rem}\label{rem:dres-deficiency-fix}
    Algorithm~\ref{alg:multi-reduction} also allows us to fix the deficiency discussed in Remark~\ref{rem:dres-deficiency}. For a non-zero proper $f\in\Kx$, let us define $(f_1,\dots,f_m):=\mathtt{HermiteList}(f)$ as in Algorithm~\ref{alg:dres}. If we now set $(\bar{f}_1,\dots,\bar{f}_m):=\mathtt{SimpleReduction}^+(f_1,\dots,f_m)$, instead of $\bar{f}_k:=\mathtt{SimpleReduction}(f_k)$ separately for $k=1,\dots,m$, we will no longer have the problem of the $B_k$ being incompatible.
\end{rem}

More generally, we can combine Algorithm~\ref{alg:multi-reduction} with the modification proposed in the above Remark~\ref{rem:dres-deficiency-fix} to compute super-compatible symbolic representations of the discrete residues $\mathrm{dres}(f_i,\omega,k)$ of several $f_1,\dots,f_n\in\Kx$ which are compatible across the different $f_i$ as well as across the different $k\in\N$. This will be done in Algorithm~\ref{alg:multi-dres}, after explaining the following small necessary modification to the $\mathtt{FirstResidues}$ procedure defined in \eqref{eq:first-residues-def}. For an $n$-tuple of proper rational functions $\mathbf{f}=(f_1,\dots,f_n)$ with squarefree denominators, suppose $\mathtt{FirstResidues}(f_i)=:(b_i,r_i)$ as in \eqref{eq:first-residues-def}, and let $b:=\mathrm{lcm}(b_1,\dots,b_n)$. Letting $a_i:=\mathrm{numer}(f_i)$ and $d_i:=\frac{b}{b_i}$, we see that $\mathrm{gcd}(b_i,d_i)=1$ because $b$ is squarefree, and therefore by the Chinese Remainder Theorem we can find a unique $p_i\in\K[x]$ with $\mathrm{deg}(p_i)<b$ such that \vspace{-.05in} \[p_i\cdot\frac{d}{dx}(b_i)\equiv a_i \pmod{ b_i }\qquad \text{and} \qquad p_i\equiv 0 \pmod{d_i}.\] Then we see that $p_i(\alpha)$ is the first-order residue of $f_i$ at each root $\alpha$ of $b$. We define \vspace{-.03in}\[\mathtt{FirstResidues}^+(\mathbf{f}):=(b;(p_1,\dots,p_n)).\vspace{-.05in}\] 

\begin{algorithm}
\caption{$\mathtt{DiscreteResidues}^+$ procedure}\label{alg:multi-dres}
\begin{algorithmic}
\Require  An $n$-tuple $(f_1,\dots,f_n)\in\Kx^n$ of proper non-zero rational functions.
\Ensure A pair $(B;\mathbf{D})$, consisting of a polynomial $B\in\K[x]$ with $\mathrm{disp}(B)=0$ and an array $\mathbf{D}=\bigl(D_{i,k} \ | \ 1\leq i \leq n; 1\leq k \leq m \bigr)$ of polynomials $D_{i,k}\in\K[x]$, such that:
\begin{enumerate}
\item $B$ is non-constant and squarefree, and $\mathrm{disp}(B)=0$;
\item the set of roots of $B$ in $\overline\K$ contains precisely one representative from each $\omega\in\overline\K/\Z$ such that $\mathrm{dres}(f_i,\omega,k)\neq 0$ for some $i,k$; and
\item $D_{i,k}(\alpha)=\mathrm{dres}(f_i,\omega(\alpha),k))$ for each root $\alpha\in\overline\K$ of $B$. \end{enumerate}\\

\For{$i=1..n$}
\State $(f_{i,1},\dots,f_{i,m_i})\gets \mathtt{HermiteList}(f_i)$;
\EndFor;
\State $m\gets \mathrm{max}\{m_1,\dots,m_n\}$;
\For{$i=1..n$}
\For{$k=1..m$}
\If{$k>m_i$}
\State $f_{i,k}\gets 0$;
\EndIf;
\EndFor;\EndFor;
\State $\mathbf{f}\gets (f_{i,k} \ | \ 1\leq i \leq n; 1\leq k \leq m)$
\State $\bar{\mathbf{f}}\gets\mathtt{SimpleReduction}^+(\mathbf{f})$;\\
\textbf{return} $\mathtt{FirstResidues}^+(\bar{\mathbf{f}})$. 
\end{algorithmic}
\end{algorithm}

\begin{cor}\label{cor:multi-dres}
    Algorithm~\ref{alg:multi-dres} is correct.
\end{cor}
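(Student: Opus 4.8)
The plan is to prove Corollary~\ref{cor:multi-dres} by assembling the three component procedures invoked by Algorithm~\ref{alg:multi-dres} and tracking how each output property is established, exactly paralleling the proof of Theorem~\ref{thm:dres} but now coordinated across both indices $i$ and $k$. First I would fix notation: for each $i$, apply Lemma~\ref{lem:hermite-list} to $\mathtt{HermiteList}(f_i)=(f_{i,1},\dots,f_{i,m_i})$, so that each $f_{i,k}$ has squarefree denominator and satisfies $\mathrm{dres}(f_i,\omega,k)=\mathrm{dres}(f_{i,k},\omega,1)$ for every $\omega\in\overline\K/\Z$, with all discrete residues of $f_i$ of order exceeding $m_i$ being zero. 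After zero-padding so that every $f_{i,k}$ is defined for $1\leq k\leq m$, this reduces the problem to simultaneously computing coordinated reduced forms and symbolic first-order residues for the entire family $\mathbf{f}=(f_{i,k})$ of proper rational functions with squarefree denominators.

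Next I would invoke Corollary~\ref{cor:multi-reduction} applied to $\mathbf{f}$, yielding $\bar{\mathbf{f}}=(\bar{f}_{i,k})$ with each $f_{i,k}-\bar{f}_{i,k}$ summable, hence $\mathrm{dres}(\bar{f}_{i,k},\omega,1)=\mathrm{dres}(f_{i,k},\omega,1)=\mathrm{dres}(f_i,\omega,k)$ for every $\omega$. Crucially, the common divisor-of-initial-roots polynomial $b_0$ constructed inside $\mathtt{SimpleReduction}^+$ depends only on the common denominator $b=\mathrm{lcm}$ of \emph{all} the denominators in the family, so the denominator of every $\bar{f}_{i,k}$ is a factor of this single $b_0$ with $\mathrm{disp}(b_0)=0$. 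This is precisely what guarantees the super-compatibility: two entries $\bar{f}_{i,k}$ and $\bar{f}_{j,l}$ with non-zero residue at a common orbit $\omega$ must share the \emph{same} pole representative in $\omega$, namely the unique root of $b_0$ in $\omega$, regardless of the indices.

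Finally I would apply the $\mathtt{FirstResidues}^+$ construction to $\bar{\mathbf{f}}$. Here the Chinese Remainder Theorem argument preceding Algorithm~\ref{alg:multi-dres} produces a single squarefree $B:=\mathrm{lcm}$ of the denominators of the $\bar{f}_{i,k}$ (which by the previous paragraph is a factor of $b_0$, hence $\mathrm{disp}(B)=0$, establishing property~(1)), together with polynomials $D_{i,k}:=p_{i,k}$ satisfying $D_{i,k}(\alpha)=\mathrm{dres}(\bar{f}_{i,k},\omega(\alpha),1)$ at every root $\alpha$ of $B$. Since each $\bar{f}_{i,k}$ has $\mathrm{pdisp}=0$, it has at most one pole per orbit, so $D_{i,k}(\alpha)=\mathrm{dres}(f_i,\omega(\alpha),k)$, giving property~(3). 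Property~(2) follows because the roots of $B$ are exactly the roots of $b_0$ lying in some orbit where at least one $\bar{f}_{i,k}$ has a pole, i.e.~where some $\mathrm{dres}(f_i,\omega,k)\neq 0$, and $b_0$ contributes exactly one representative per such orbit.

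The main obstacle I anticipate is verifying property~(2) cleanly, in particular the \emph{irredundancy} that $B$ contains precisely \emph{one} representative per relevant orbit and no spurious roots. One must check that every root $\alpha$ of $B=\mathrm{lcm}$ of the $\bar{f}_{i,k}$-denominators is an \emph{initial} root (a root of $b_0$), so that distinct roots of $B$ lie in distinct orbits, and simultaneously that every orbit carrying a non-zero residue of some order for some $f_i$ actually contributes its initial root to $B$. Both directions hinge on the fact established in Corollary~\ref{cor:multi-reduction} that the reduction relocates every pole to the unique initial root of its orbit; once that relocation property is invoked, the bookkeeping is routine, but stating it without re-deriving the internals of $\mathtt{SimpleReduction}^+$ requires care. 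Since the argument is a direct orchestration of results already proved, I would keep the proof brief, citing Lemma~\ref{lem:hermite-list}, Corollary~\ref{cor:multi-reduction}, and the $\mathtt{FirstResidues}^+$ discussion, and emphasizing only the coordination across indices as the genuinely new content.
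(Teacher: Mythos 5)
Your proposal is correct and takes essentially the same route as the paper, whose entire proof is a one-line appeal to the correctness of Algorithm~\ref{alg:multi-reduction} combined with the proof of Theorem~\ref{thm:dres} \emph{mutatis mutandis}. You have simply made explicit the details the paper leaves implicit --- in particular the key point that every reduced denominator divides the single polynomial $b_0$ with $\mathrm{disp}(b_0)=0$, which is exactly what delivers the coordination properties (1)--(3).
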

\begin{proof}
    This is an immediate consequence of the correctness of the Algorithm~\ref{alg:multi-reduction} for the $\mathtt{SimpleReduction}^+$ procedure, coupled with the same proof, \emph{mutatis mutandis}, given for Theorem~\ref{thm:dres}.
\end{proof}

Algorithm~\ref{alg:multi-dres} leads immediately to a simple algorithmic solution of the problem of computing $V(\mathbf{f})$ in \eqref{eq:v-space}.

\begin{prop}\label{prop:v-space}
    Let $\mathbf{f}=(f_1,\dots,f_n)$ with each $0\neq f_i\in\Kx$ proper. Let  \[\mathtt{DiscreteResidues}^+(\mathbf{f})=(B,\mathbf{D})\] with \[\mathbf{D}=(D_{i,k} \ | \ 1\leq i\leq n; \ 1\leq k\leq m)\] be as in Algorithm~\ref{alg:multi-dres} and let $V(\mathbf{f})$ be as in \eqref{eq:v-space}. Then \begin{equation}\label{eq:v-space-proof}V(\mathbf{f})=\left\{\mathbf{v}\in\K^n \ \middle| \ \sum_{i=1}^n v_iD_{i,k}=0 \ \text{for each} \ 1\leq k \leq m\right\}.\end{equation}
\end{prop}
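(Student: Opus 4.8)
The plan is to unwind the definitions so that the abstract summability criterion \eqref{eq:v-space-test} becomes the concrete linear condition \eqref{eq:v-space-proof}. By Proposition~\ref{prop:dres-sum} applied to the rational function $\mathbf{v}\cdot\mathbf{f}=\sum_iv_if_i$, and using the $\K$-linearity of the discrete residue map $f\mapsto\mathrm{dres}(f,\omega,k)$ (which is immediate from Definition~\ref{def:dres}, since the complete partial fraction decomposition is $\K$-linear in $f$), we have that $\mathbf{v}\in V(\mathbf{f})$ if and only if $\sum_{i=1}^nv_i\,\mathrm{dres}(f_i,\omega,k)=0$ for every $\omega\in\overline\K/\Z$ and every $k\in\N$. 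So the whole task reduces to showing that this (a priori infinite) family of $\overline\K$-linear conditions, indexed by orbits $\omega$ and orders $k$, is equivalent to the finite family of $\K$-polynomial identities $\sum_iv_iD_{i,k}=0$ for $1\leq k\leq m$.

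Next I would invoke Corollary~\ref{cor:multi-dres} to translate the residue data into the output $(B,\mathbf{D})$ of Algorithm~\ref{alg:multi-dres}. By property~(2) of that output, every orbit $\omega$ carrying \emph{any} nonzero residue $\mathrm{dres}(f_i,\omega,k)$ has a representative among the roots of $B$; conversely, for orbits with all residues zero the condition $\sum_iv_i\,\mathrm{dres}(f_i,\omega,k)=0$ holds vacuously. By property~(3), for each root $\alpha\in\overline\K$ of $B$ we have $D_{i,k}(\alpha)=\mathrm{dres}(f_i,\omega(\alpha),k)$. Hence for a fixed $k$ the condition $\sum_iv_i\,\mathrm{dres}(f_i,\omega,k)=0$ for all $\omega$ is equivalent to $\sum_iv_iD_{i,k}(\alpha)=0$ for every root $\alpha$ of $B$. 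Since every nonzero residue has order at most $m$ (again from the $\mathtt{HermiteList}$ step, as in the proof of Theorem~\ref{thm:dres}), it suffices to range over $1\leq k\leq m$.

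The one genuine point to nail down, which I expect to be the main (if modest) obstacle, is the passage from ``$\sum_iv_iD_{i,k}(\alpha)=0$ for every root $\alpha$ of $B$'' to the polynomial identity ``$\sum_iv_iD_{i,k}=0$ in $\K[x]$.'' This is where the squarefreeness of $B$ from property~(1) does the work: the polynomial $P_k:=\sum_iv_iD_{i,k}\in\K[x]$ satisfies $\deg(P_k)<\deg(B)$ by construction (each $D_{i,k}$ is a proper residue numerator against the common denominator $B$, via $\mathtt{FirstResidues}^+$), yet it vanishes at all $\deg(B)$ distinct roots of $B$; a polynomial of degree strictly less than $\deg(B)$ vanishing at $\deg(B)$ distinct points of $\overline\K$ must be identically zero. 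I would state this degree bound explicitly and then conclude $P_k=0$, giving the reverse inclusion; the forward inclusion (a vanishing polynomial certainly vanishes at the roots) is immediate. Combining the two inclusions yields exactly \eqref{eq:v-space-proof}.
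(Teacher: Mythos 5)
Your proof is correct and follows essentially the same route as the paper's: linearity of discrete residues plus Proposition~\ref{prop:dres-sum}, properties (2) and (3) from Corollary~\ref{cor:multi-dres}, and the degree bound $\deg(D_{i,k})<\deg(B)$ together with squarefreeness of $B$ to upgrade vanishing at all roots of $B$ to the polynomial identity $\sum_i v_iD_{i,k}=0$. The only difference is expository: you spell out the linearity of $\mathrm{dres}$, the vacuous orbits, and the order bound $k\leq m$ explicitly, where the paper compresses these into a single appeal to the correctness of Algorithm~\ref{alg:multi-dres}.
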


\begin{proof} For each $\mathbf{v}\in\K^n$ and $\alpha\in\overline{\K}$ such that $B(\alpha)=0$, \begin{equation}\label{eq:dres-application}\mathrm{dres}(\mathbf{v}\cdot\mathbf{f},\omega(\alpha),k)=\sum_{i=1}^nv_iD_{i,k}(\alpha)\end{equation} by the correctness of Algorithm~\ref{alg:multi-dres} proved in Corollary~\ref{cor:multi-dres}. Moreover, since $\mathrm{deg}(D_{i,k})<\mathrm{deg}(B)$ and $B$ is squarefree, for each given $1\leq k\leq m$ we see that \eqref{eq:dres-application} holds for every root $\alpha$ of $B$ if and only if the polynomial $\sum_iv_iD_{i,k}=0$ identically. We conclude by Proposition~\ref{prop:dres-sum}.\end{proof}

\begin{rem}\label{rem:telescoping}
One can produce without too much additional effort a variant of Proposition~\ref{prop:v-space} that computes more generally the $\K$-vector space of solutions to the creative telescoping problem obtained by replacing the unknown coefficients $v_i\in\K$ in \eqref{eq:multi-telescoper} with unknown linear differential operators $\mathcal{L}_i\in\K\bigl[\frac{d}{dx}\bigr]$.
\end{rem}

\section{Connections with Galois theory}

We are interested in computing the vector space $V(\mathbf{f})$ in \eqref{eq:v-space} because it is isomorphic to the difference Galois group of the block-diagonal difference system $\sigma(Y)=(A_1\oplus\dots\oplus A_n)Y$ with diagonal blocks $A_i=\left(\begin{smallmatrix}1 & f_i \\ 0 & 1\end{smallmatrix}\right)$. Indeed, this difference Galois group consists of block-diagonal matrices $G(\mathbf{v})=G(v_1)\oplus\dots\oplus G(v_n)$ with diagonal blocks $G(v_i):=\left(\begin{smallmatrix}1 & v_i \\ 0 & 1\end{smallmatrix}\right)$ for $\mathbf{v}=(v_1,\dots,v_n)\in V(\mathbf{f})$.

Another application of the procedures developed here to Galois theory of difference equations arises from the consideration of diagonal systems \begin{equation}\label{eq:diagonal}
    \sigma(Y)=
    \mathrm{diag}(r_1,\dots,r_n)Y, \qquad \text{where} \ r_1,\dots,r_n\in\Kx^\times
\end{equation} As shown in \cite[\S2.2]{vanderput-singer:1997}, the difference Galois group of \eqref{eq:diagonal} is \[\Gamma:=\left\{(\gamma_1,\dots,\gamma_n)\in(\overline\K^\times)^n \ \middle| \ \gamma_1^{e_1}\cdots\gamma_n^{e_n}=1 \ \text{for} \ \mathbf{e}\in E\right\},\] where $E\subseteq \Z^n$ is the subgroup of $\mathbf{e}=(e_1,\dots,e_n)$ such that \begin{equation}\label{eq:multiplicative-relation}r_1^{e_1}\cdots r_n^{e_n}=\frac{\sigma(p_\mathbf{e})}{p_\mathbf{e}}\end{equation} for some $p_\mathbf{e}\in\Kx^\times$. 

Now suppose \eqref{eq:multiplicative-relation} holds, and let us write $f_i:=\frac{d}{dx}(r_i)r_i^{-1}$ for each $i=1,\dots,n$ and $g_\mathbf{e}:=\frac{d}{dx}(p_\mathbf{e})p_\mathbf{e}^{-1}$, so that we have \begin{equation}\label{eq:special-multi-telescoper}e_1f_1+\dots+e_nf_n=\sigma(g_\mathbf{e})-g_\mathbf{e}.\end{equation} At first glance, this looks like a version of problem \eqref{eq:multi-telescoper}, but it is even more special because the $f_i$ have only first-order residues, all in $\Z$. 
So we can compute the $\mathbb{Q}$-vector space $V$ of solutions to \eqref{eq:special-multi-telescoper} using $\mathtt{SimpleReduction}^+(f_1,\dots,f_n)$, just as in Proposition~\ref{prop:v-space}, as a preliminary step, and then compute a $\mathbb{Z}$-basis $\mathbf{e}_1,\dots,\mathbf{e}_s$ of the free abelian group $\tilde{E}:= V\cap\Z^n$. Since each $g_{\mathbf{e}_j}$ has only simple poles with integer residues, one can compute explicitly $p_{\mathbf{e}_j}\in\Kx$ such that $\frac{d}{dx}p_{\mathbf{e}_j}=g_{\mathbf{e}_j}p_{\mathbf{e}_j}$, and thence constants $\gamma_{j}\in\K^\times$ such that $\mathbf{r}^{\mathbf{e}_j}=\gamma_j$. This reduces the computation of $E$ from the defining multiplicative condition \eqref{eq:multiplicative-relation} in $\Kx^\times$ modulo the subgroup $\{\sigma(p) / p \ | \ p\in\Kx^\times\}$ to the equivalent defining condition in $\K^\times$: \[E=\left\{\sum_{j=1}^s m_j\mathbf{e}_j \in \tilde{E}\ \middle| \ \prod_{j=1}^s\gamma_j^{m_j}=1\right\}.\]

\section{Example} Let us conclude by illustrating some of our procedures on the following example considered in both \cite{Pirastu1995b,Malm:1995}. In order to make the computations easier for the human reader to follow, we have allowed ourselves to write down explicitly in this small example both irreducible factorizations of denominators.  
We emphasize and insist upon the fact that none of our procedures uses these factorizations.
    
Consider the rational function 
  \begin{equation}\label{eq:f-factorization}f:=\frac{1}{x^3(x + 2)^3(x + 3)(x^2 + 1)(x^2 + 4x + 5)^2}.  \end{equation}
We first compute $\mathtt{HermiteList}(f)=(f_1,f_2,f_3)$ with
\begin{gather*}
    f_1= \frac{787x^5 + 4803x^4 + 9659x^3 + 9721x^2 + 9502x + 5008}{18000(x^2 + 1)(x + 3)(x^2 + 4x + 5)(x + 2)x};\\
f_2=-\frac{787x^3 + 3372x^2 + 4696x + 1030}{18000(x^2 + 4x + 5)x(x + 2)};\quad \text{and}\\
f_3:=-\frac{7x - 1}{300(x + 2)x};
    \end{gather*} using Algorithm~\ref{alg:hermite-list}. We apply the remaining procedures to $f_1$ only -- the remaining $f_2$ and $f_3$ are similar and easier. Denoting \begin{equation}\label{eq:b-factorization} \notag b:=(x^2 + 1)(x + 3)(x^2 + 4x + 5)(x + 2)x\end{equation} the monic denominator of $f_1$, we compute with Algorithm~\ref{alg:shift-set} (or see by inspection) that $\mathtt{ShiftSet}(b)=\{1,2,3\}$. The factorization $b=b_0b_1b_2b_3$ computed within Algorithm~\ref{alg:simple-reduction} is given by
\begin{align*}
    b_0&=(x + 3)(x^2 + 4x + 5);\\
    b_1 &=\mathrm{gcd}(\sigma^{-1}(b_0),b)=x+2;\\
    b_2 & =\mathrm{gcd}(\sigma^{-2}(b_0),b)= x^2+1; \\
    b_3 &=\mathrm{gcd}(\sigma^{-3}(b_0),b)=x.
\end{align*}
The individual summands in the partial fraction decomposition of $f_1$ with respect to this factorization are given by
\begin{gather*}
    \frac{a_0}{b_0}=-\frac{13391x^2 + 37742x - 9293}{1080000(x + 3)(x^2 + 4x + 5)};\\
    \frac{a_1}{b_1} =\frac{1}{250(x+2)};\qquad
    \frac{a_2}{b_2} = \frac{-7x-1}{800(x^2+1)}; \qquad
    \frac{a_3}{b_3} =\frac{313}{33750x}.
\end{gather*}
The reduced form $\bar{f}_1=b_0+\sigma\left(\frac{a_1}{b_1}\right)+\sigma^2\left(\frac{a_2}{b_2}\right)+ \sigma^3\left(\frac{a_3}{b_3}\right)$ is given by \[\bar{f}_1=\frac{273x + 1387}{20000(x + 3)(x^2 + 4x + 5)}.\] The first pair of polynomials $(B_1,D_1)\in\K[x]$ computed by Algorithm~\ref{alg:dres} is given by \[B_1=(x+3)(x^2+4x+5) \qquad \!\text{and} \!\qquad D_1=\frac{59}{16000}x^2+\frac{33}{40000}x-\frac{1321}{80000}.\]

Let us compare this output $(B_1,D_1)$ with the discrete residues $\mathrm{dres}(f,\omega,1)$ of $f$ according to the Definition~\ref{def:dres} in terms of classical residues. We see from the factorization of the denominator of $f$ in \eqref{eq:f-factorization} that its set of poles is \[\bigl\{-3,-2,0,\sqrt{-1}-2, \sqrt{-1},-\sqrt{-1}, -\sqrt{-1}-2\bigr\},\] 
and that each of these poles belongs to one of the three orbits \[\omega(0)=\mathbb{Z}\qquad\text{and}\qquad\omega\bigl(\pm \sqrt{-1}\bigr)=\pm\sqrt{-1}+\Z.\] Therefore, $f$ has no discrete residues outside of these orbits, and we verify that the set of roots $\{-3,\sqrt{-1}-2,-\sqrt{-1}-2\}$ of the polynomial $B_1$ correctly contains precisely one representative from each of these orbits (subject to our verification below that the first-order discrete residues of $f$ at these orbits are actually non-zero!). We can compute directly in this small example that the classical first-order residues of $f$ at the poles in $\omega(0)$ are given by 
\[
    \mathrm{Res}_1(f,0)=\tfrac{313}{33750};\quad \mathrm{Res}_1(f,-2)=\tfrac{1}{250}; \quad \mathrm{Res}_1(f,-3)=\tfrac{1}{1080};
\] and at the poles in $\omega\bigl(\pm\sqrt{-1}\bigr)$ are given by \vspace{-.1in}\begin{gather*}\mathrm{Res}_1\bigl(f,\pm\sqrt{-1}\bigr)=\frac{\pm\sqrt{-1}-7}{16000};\quad\text{and}\\ \mathrm{Res}_1\bigl(f,\pm\sqrt{-1}-2\bigr)=\frac{\mp1119\sqrt{-1}-533}{80000}.\end{gather*}
Finally we can verify directly that the polynomial $D_1$ correctly computes the first-order discrete residue of $f$ at all three orbits $\omega(0)$ and $\omega\left(\pm\sqrt{-1}\right)$ according to Definition~\ref{def:dres}: \[\mathrm{dres}(f,\omega(0),1)=\tfrac{313}{33750}+\tfrac{1}{250}+\tfrac{1}{1080}=\tfrac{71}{5000}=D_1(-3);\]and\begin{multline*}
\mathrm{dres}\bigl(f,\omega\bigl(\pm\sqrt{-1}\bigr),1\bigr)=\frac{\pm\sqrt{-1}-7}{16000}+\frac{\mp1119\sqrt{-1}-533}{80000}=\\ =\frac{\mp557\sqrt{-1}-284}{40000}=D\bigl(\pm\sqrt{-1}-2\bigr).\end{multline*}

\section*{Acknowledgements}

 Both authors gratefully acknowledge the support of NSF grant CCF-1815108 and a UTD startup grant provided to the first author. We also thank Shaoshi Chen and Michael Singer for helpful discussions and suggestions during the preparation of the second author's PhD thesis \cite{sitaula:2023}, on which this manuscript is based.

\end{document}